\documentclass[preprint,12pt,fleqn]{elsarticle}

\usepackage{amsmath,amssymb}
\usepackage{amsthm}
\usepackage{enumitem}
\usepackage{lmodern}
\usepackage{microtype}
\usepackage{tikz}
\usetikzlibrary{arrows.meta,automata,positioning,calc}
\usepackage{placeins}
\usepackage{needspace}
\usepackage[hidelinks]{hyperref}
\hypersetup{pdftitle={Prime Accessible Deterministic Automata},
	pdfauthor={Zhao Shen and Chenghe Xu}}
\biboptions{numbers,sort&compress}
\journal{Theoretical Computer Science}

\allowdisplaybreaks
\setlist[enumerate]{leftmargin=2.2em,itemsep=2pt,topsep=3pt}
\setlist[itemize]{leftmargin=2.0em,itemsep=2pt,topsep=3pt}

\newtheorem{theorem}{Theorem}[section]
\newtheorem{proposition}[theorem]{Proposition}
\newtheorem{lemma}[theorem]{Lemma}
\theoremstyle{definition}

\newtheorem{example}[theorem]{Example}
\theoremstyle{remark}
\newtheorem*{remark}{Remark}

\newcommand{\A}{\mathcal{A}}
\newcommand{\B}{\mathcal{B}}
\newcommand{\C}{\mathcal{C}}
\newcommand{\U}{\mathcal{U}}
\newcommand{\FA}{F_{\A}}
\newcommand{\FB}{F_{\B}}
\newcommand{\FC}{F_{\C}}
\newcommand{\eps}{\varepsilon}
\newcommand{\N}{\mathbb N}
\newcommand{\lcmop}{\operatorname{lcm}}

\begin{document}
	\begin{frontmatter}
		
		\title{Prime Accessible Deterministic Automata}
		\author[csu]{Zhao Shen}
		\author[csu]{Chenghe Xu\corref{cor1}}
		\ead{262138051@csu.edu.cn}
		\cortext[cor1]{Corresponding author.}
		\address[csu]{School of Mathematics and Statistics, Central South University,
			Changsha, China}
		
		\begin{abstract}
			
			We classify prime accessible deterministic automata over finite alphabets,
			both among finite automata and among automata with arbitrary state sets.
			Every one-state automaton is prime. Over a one-letter alphabet, a finite
			nontrivial automaton is prime exactly when it is a cycle of prime-power
			length or a finite chain ending in a fixed state. The unique infinite
			accessible unary automaton, up to isomorphism, is also prime.
			Every nontrivial automaton over an alphabet with at least two letters is
			not prime. Shen proved the finite case; we give a shorter proof.
			
			We also characterize a decomposition of the automaton whose states are all finite input words. 
			The reachable product of the quotients by the least right congruences
			identifying the empty word with nonempty words $p$ and $q$, respectively,
			is isomorphic to this automaton exactly when
			$p^r\ne q^s$ for all positive integers $r,s$.

		\end{abstract}
		
		\begin{keyword}
			prime automata \sep divisibility \sep reachable synchronous products
			\sep right congruences \sep infinite automata
		\end{keyword}
		
	\end{frontmatter}
	
	\section{Introduction}
	
	For a nonempty set $\Sigma$ of input letters, let $\Sigma^*$ denote the set of
	all finite words over $\Sigma$, including the empty word $\eps$. For
	$w\in\Sigma^*$, write $|w|$ for its length. A deterministic automaton
	over $\Sigma$, or a $\Sigma$-automaton, is a quadruple
	$\A=(S,i,\Sigma,t)$, where $S$ is the state set, $i\in S$ is the initial state,
	and $t:S\times\Sigma\to S$ is the transition function. We extend $t$ to words by
	\[
	t(s,\eps)=s,\qquad t(s,uv)=t(t(s,u),v)
	\quad(\forall s\in S,\ \forall u,v\in\Sigma^*),
	\]
	where $uv$ denotes concatenation. The automaton is \emph{accessible} if every
	state is reached from $i$ by some word. All automata in this paper are accessible;
	no accepting states or output functions are specified.
	For a set $X$, we use $|X|$ to represent its cardinality.
	An automaton is \emph{finite} if its state set is finite, and \emph{nontrivial}
	if it has at least two states. For any automaton $\mathcal{A}$, we write
	$S_{\mathcal{A}}$, $i_{\mathcal{A}}$, and $t_{\mathcal{A}}$ for its state set,
	initial state, and transition function.
	
	We follow Yao's definitions of factors, isomorphisms, products, divisibility,
	and primality~\cite[Section~5]{yao2004}; see also Kamae and Mend\`es
	France~\cite[Section~2]{kamae1996}. Yao works with finite automata over
	alphabets with at least two letters. We use the same definitions for
	arbitrary accessible state sets and also include one-letter alphabets.
	
	For accessible $\Sigma$-automata $\A,\B$, we say that $\A$ is a
	\emph{factor} of $\B$ if there is a surjective map
	$\lambda:S_{\B}\to S_{\A}$ satisfying
	\[
	\begin{aligned}
		\lambda(i_{\B})&=i_{\A},\\
		\lambda(t_{\B}(x,a))&=t_{\A}(\lambda(x),a)
		\qquad(\forall x\in S_{\B},\ \forall a\in\Sigma).
	\end{aligned}
	\]
	We write $\lambda:\B\to\A$ for this \emph{homomorphism}. A bijective
	homomorphism is an \emph{isomorphism}; we write $\A\cong\B$ when one exists.
	
	For accessible $\Sigma$-automata $\B,\C$, their \emph{reachable synchronous
		product} is
	\[
	\B\times\C
	=\bigl(S_{\B\times\C},(i_{\B},i_{\C}),\Sigma,t_{\B\times\C}\bigr),
	\]
	where
	\[
	\begin{aligned}
		S_{\B\times\C}
		&=\{(t_{\B}(i_{\B},w),t_{\C}(i_{\C},w)):w\in\Sigma^*\},\\
		t_{\B\times\C}((x,y),a)
		&=(t_{\B}(x,a),t_{\C}(y,a))
	\end{aligned}
	\]
	for every $(x,y)\in S_{\B\times\C}$ and $a\in\Sigma$. Only states reachable
	from the initial pair are retained. Throughout the paper, ``product'' means
	this reachable synchronous product.
	
	We say that $\B$ is \emph{divisible by} $\A$, or that $\B$ is a
	\emph{multiple} of $\A$, and write $\A\mid\B$, if there is an accessible
	$\Sigma$-automaton $\C$ such that
	\[
	\B\cong\A\times\C.
	\]
	Proposition~\ref{prop:factor} shows that divisibility is equivalent to the
	factor relation even when state sets are infinite.
	
	Over a fixed alphabet, we consider two classes: all accessible deterministic
	automata, and those with finite state sets. An automaton $\A$ in either
	class is \emph{prime in that class} if
	\[
	\A\mid\B\times\C
	\quad\Longrightarrow\quad
	\bigl(\A\mid\B\ \text{or}\ \A\mid\C\bigr)
	\]
	for all $\B,\C$ in that class. We call $\A$ \emph{finite-state prime} when
	$\B,\C$ are required to be finite, and \emph{unrestrictedly prime} when
	their state sets may be arbitrary. Outside Section~\ref{sec:finite},
	``prime'' means unrestrictedly prime unless stated otherwise. One-state
	automata are included.
	
	This differs from the primality studied by Kupferman and
	Mosheiff~\cite{km2015} and Spenner~\cite{spenner2026}, where a regular language
	is composite if it is a finite intersection of languages accepted by
	deterministic finite automata with fewer states than its minimal automaton.
	Here primality concerns the underlying accessible transition structure,
	with no size restriction on the two factors.
	
	Shen proved that no finite nontrivial automaton over a finite alphabet with
	at least two letters is prime among finite automata~\cite{shen2022}.
	Theorem~\ref{thm:nonfin} gives a shorter proof. It uses the states reached
	after the first letter instead of classifying the whole transition graph.
	The required state information is stored in two finite automata, with
	explicit bounds on their sizes.
	
	The main new results are Lemma~\ref{lem:prefix} and
	Theorem~\ref{thm:nonunary}. They determine exactly when two quotients by the least right congruences
	identifying the empty word with chosen nonempty words have product
	isomorphic to the automaton of all finite input words.
	This gives more than non-primality: the product retains the entire input
	word, not just the state reached in a particular automaton.
	The resulting decomposition applies to arbitrary state sets and alphabets.
	
	An automaton with one input letter is called \emph{unary}; one with at least
	two input letters is called \emph{non-unary}. Over the alphabet $\{a\}$,
	for integers $\ell\ge0$ and $p\ge1$, define the unary automaton
	\[
	\A(\ell,p)=\bigl(\{0,1,\ldots,\ell+p-1\},0,\{a\},\delta_{\ell,p}\bigr),
	\]
	where
	\[
	\delta_{\ell,p}(r,a)=
	\begin{cases}
		r+1,&0\le r<\ell+p-1,\\
		\ell,&r=\ell+p-1.
	\end{cases}
	\]
	Starting from $0$, this automaton makes $\ell$ transitions before entering a
	cycle of $p$ states. We call $\ell$ the \emph{tail length} and $p$ the
	\emph{cycle length}.
	
	The following is the primality classification of finite-state automata.
	
	\begin{theorem}\label{thm:finite}
		Let $\A=(S,i,\Sigma,t)$ be a finite accessible deterministic automaton over a finite nonempty
		alphabet. Primality in this theorem is taken among finite accessible deterministic automata.
		\begin{enumerate}[label=\textup{(\roman*)}]
			\item If $|S|=1$, then $\A$ is prime.
			\item If $|\Sigma|=1$, then $\A$ is isomorphic to a unique automaton $\A(\ell,p)$, with
			integers $\ell\ge0$ and $p\ge1$.
			Apart from the one-state automaton $\A(0,1)$, it is prime if and only if
			\begin{equation*}
				\ell=0\ \text{and $p$ is a prime power},
				\qquad\text{or}\qquad
				\ell\ge1\ \text{and }p=1.
			\end{equation*}
			Here a prime power means an integer of the form $q^k$, where $q$ is a prime integer and
			$k\ge1$ is an integer.
			\item If $|\Sigma|\ge2$ and $|S|\ge2$, then there are finite accessible
			$\Sigma$-automata $\B,\C$ such that
			\[
			\A\mid\B\times\C,\qquad \A\nmid\B,\qquad \A\nmid\C.
			\]
			In particular, $\A$ is not prime.
		\end{enumerate}
	\end{theorem}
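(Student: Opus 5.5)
The plan is to work throughout with the factor characterization of divisibility (Proposition~\ref{prop:factor}): $\A\mid\B$ if and only if there is a surjective homomorphism $\B\to\A$. Equivalently, since all automata are accessible, $\A\mid\B$ exactly when the right congruence $\ker(\B)=\{(u,v):t_\B(i_\B,u)=t_\B(i_\B,v)\}$ on $\Sigma^*$ is contained in $\ker(\A)$. Moreover $\ker(\B\times\C)=\ker(\B)\cap\ker(\C)$.

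For (i), a one-state automaton has $\ker=\Sigma^*\times\Sigma^*$, which contains every kernel, so it divides every automaton and is trivially prime.

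For (ii), with $|\Sigma|=1$ accessibility forces the states to be $t(i,a^n)$. Let $\ell$ be the least index of a repeated state and $p$ the minimal period after it; this gives the isomorphism with $\A(\ell,p)$, and uniqueness holds because $\ell,p$ are isomorphism invariants. The kernel of $\A(\ell,p)$ identifies $a^m$ and $a^n$ iff $m=n$, or $m,n\ge\ell$ and $m\equiv n\pmod p$. Hence $\A(\ell,p)\mid\A(\ell',p')$ iff $\ell\le\ell'$ and $p\mid p'$. The product $\A(\ell_1,p_1)\times\A(\ell_2,p_2)$ is $\A(\max\ell_j,\lcmop(p_1,p_2))$. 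Primality of $\A(\ell,p)$ then reads: $\ell\le\max(\ell_1,\ell_2)$ and $p\mid\lcmop(p_1,p_2)$ must imply that for some $j$ both $\ell\le\ell_j$ and $p\mid p_j$.
\begin{itemize}
\item If $\ell=0$ and $p=q^k$: $q^k\mid\lcmop$ forces $q^k$ to divide some $p_j$, and the tail condition is vacuous, so the automaton is prime.
\item If $p=1$ and $\ell\ge1$: take $j$ with $\ell_j$ maximal; then $\ell\le\ell_j$ and $1\mid p_j$, so it is prime.
\item If $p$ is not a prime power (and $p>1$): write $p=mn$ with $\gcd(m,n)=1$ and $m,n>1$. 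Use $\A(\ell,m)$ and $\A(\ell,n)$; neither is divisible by $\A(\ell,p)$.
\item If $\ell\ge1$ and $p>1$: use $\A(\ell,1)$ and $\A(0,p)$; the first fails the period condition and the second fails the tail condition.
\end{itemize}

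For (iii), which I expect to be the main obstacle, I would build finite $\B,\C$ whose kernels are each not contained in $\ker(\A)$ but whose intersection is. Fix distinct letters $a,b$ and let $N=|S|$. Let $\B$ record the state of $\A$ reached by the word so far, unless the word begins with $a$, in which case it also forgets something. Concretely, let $\B$ be $\A$ run on the input except that after a first letter $a$ it tracks $\A$ from a collapsed point. Let $\C$ be the symmetric construction with $b$. The product still knows the first letter, and hence the true $\A$-state, so $\A\mid\B\times\C$.

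To make this precise I would define $\B$ with states: an initial state, plus pairs (first letter $c$, state of $\A$) for $c\ne a$, plus a single absorbing sink for words beginning with $a$. In $\B$ the words $a$ and $aa$ are identified. Since $\A$ is nontrivial one needs a pair of words starting with $a$ that $\A$ separates. If every such pair is inseparable for both $a$ and $b$, pick instead a letter $c$ and a state-collapse that loses information. The delicate case is when the states reached after words starting with $a$ are all identical. In that case $\A$ separates $\eps$ from something, and I would modify the construction so that $\B$ merges $\eps$ with a suitable word. The key steps are choosing, for each first letter, which information to discard so that each factor alone fails but together they recover $\A$'s state. The step to verify carefully is the case analysis guaranteeing $\A\nmid\B$ and $\A\nmid\C$. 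Sizes are bounded by $O(|\Sigma|N)$.
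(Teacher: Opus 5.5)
Parts (i) and (ii) are correct and follow the paper's route: kernel containment, the tail--cycle invariants, the $\max$/$\lcmop$ product formula, and the same splittings.

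The gap is in (iii). Your symmetric construction does give $\A\mid\B\times\C$. There, $\B$ sends $a\Sigma^*$ to a sink and tracks $\A$ elsewhere, and $\C$ does the same with $b$. But $\A\nmid\B$ holds exactly when $H_a=\{\FA(aw):w\in\Sigma^*\}$ has at least two states, and $\A\nmid\C$ exactly when $H_b$ does. So you have proved (iii) only when two distinct letters both satisfy $|H_c|\ge2$. Everything else is left to the phrases ``a state-collapse that loses information'' and ``merge $\eps$ with a suitable word''. These are not constructions, and you say yourself that the case analysis is unverified.

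Some of it can be completed along your lines. If $|H_b|\ge2$ but $|H_a|=1$, then $\FA(a)\ne i$ is absorbing. Pair the $b$-sink automaton with the copy of $\A$ in which $a$ loops at the initial state. If every $H_c$ is a singleton and $\FA(a)=\FA(b)$ for some $a\ne b$, take the two copies of $\A$ in which $a$, respectively $b$, loops at the initial state. This extends the paper's subcase $k=1$.

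Neither idea reaches the case in which every $\FA(c)$ is absorbing and $c\mapsto\FA(c)$ is injective. An example is $\Sigma=\{a,b\}$ and $S=\{i,s_a,s_b\}$ with $s_a\ne s_b$ absorbing. There each sink automaton has the same kernel as $\A$, hence is divisible by $\A$. The two loop-at-the-start automata both identify $ab$ with $ba$, which $\A$ separates. So a factor must lose information by identifying words with different first letters, while the product still recovers the first letter.

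The paper's Case~2 does this with counters modulo $k=|\{\FA(c):c\in\Sigma\}|$. Fix a bijection $\gamma$ from $\{\FA(c):c\in\Sigma\}$ onto $\mathbb Z/k\mathbb Z$. For $w=dv$, $\B$ records $|v|\bmod k$ and $\C$ records $\gamma(\FA(d))+|v|\bmod k$. Then $\B$ identifies $a$ with $b$, and $\C$ identifies $ac^r$ with $b$ for a suitable $r$. The difference of the two coordinates recovers $\FA(d)$.

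In its Case~1, the paper handles every automaton with some $|H_a|\ge2$ at once. It pairs your $a$-sink automaton with a $\C$ that glues $uz$ to $bz$ along the pairs $(\FA(uz),\FA(bz))$. Here $u\in a\Sigma^*$ is a word with $|u|\le2$ and $\FA(u)\ne\FA(b)$.

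Without a device of this kind for the injective absorbing case, (iii) is not proved.
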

	
	We now allow both finite and infinite state sets. Put $\N=\{0,1,2,\ldots\}$.
	The \emph{infinite unary successor automaton} is
	\[
	\U=(\N,0,\{a\},\delta),\qquad
	\delta(n,a)=n+1\quad(\forall n\in\N).
	\]
	Thus each input letter increases the state by one. An automaton with one
	state is called a \emph{unit automaton}.
	
	\begin{theorem}\label{thm:main}
		Let $\A=(S,i,\Sigma,t)$ be an accessible deterministic automaton over a finite nonempty
		alphabet. Primality in this theorem is taken among all accessible deterministic automata,
		with no restriction on state-set cardinality.
		\begin{enumerate}[label=\textup{(\roman*)}]
			\item Every unit automaton is prime.
			\item If $|\Sigma|=1$ and $S$ is infinite, then $\A$ is isomorphic to the infinite unary
			successor automaton $\U$ and is prime.
			\item If $|\Sigma|=1$ and $S$ is finite, the classification is the same as in Theorem~\ref{thm:finite}\textup{(ii)}.
			\item If $|\Sigma|\ge2$ and $|S|\ge2$, then $\A$ is not prime.
		\end{enumerate}
	\end{theorem}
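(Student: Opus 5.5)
The plan is to reduce divisibility to the factor relation, following Proposition~\ref{prop:factor}: $\A\mid\B$ holds exactly when there is a surjective homomorphism $\B\to\A$. Equivalently, with $\B=\Sigma^*/\!\sim_\B$, the right congruence of $\B$ refines that of $\A$. A product $\B\times\C$ then corresponds to the intersection of the two right congruences. So primality of $\A$ means the following: whenever $\sim_\B\cap\sim_\C\subseteq\sim_\A$, already $\sim_\B\subseteq\sim_\A$ or $\sim_\C\subseteq\sim_\A$. This reformulation drives every part.

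Part (i) is immediate, because the unit automaton is a factor of every accessible automaton via the constant map; in particular $\A\mid\B$ always holds. For (ii), I would first take $S$ infinite with $\Sigma=\{a\}$. The states $t(i,a^n)$ must then be pairwise distinct, since otherwise the orbit is eventually periodic and finite; so $n\mapsto t(i,a^n)$ is an isomorphism $\U\to\A$.

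For primality of $\U$, suppose $\U$ is a factor of $\B\times\C$. The product must be infinite, and since it is unary and accessible it is isomorphic to $\U$ (its congruence is the identity). If both $\B$ and $\C$ were finite, say $\A(\ell_1,p_1)$ and $\A(\ell_2,p_2)$, the product would be finite, with tail $\max(\ell_1,\ell_2)$ and cycle $\lcmop(p_1,p_2)$. Hence one factor is infinite, hence isomorphic to $\U$, and so $\U$ divides it.

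For (iii), I would show that the finite classification of Theorem~\ref{thm:finite}(ii) persists when infinite $\B,\C$ are allowed. The \emph{only if} direction carries over unchanged: a non-prime witness with finite $\B,\C$ is still a witness. For the \emph{if} direction, suppose $\A$ is finite and $\A\mid\B\times\C$. If both $\B$ and $\C$ are finite, Theorem~\ref{thm:finite}(ii) applies. If one of them, say $\B$, is infinite, then $\B\cong\U$, and every unary automaton is a factor of $\U$ via $n\mapsto t(i,a^n)$. Thus $\A\mid\B$ automatically, and this case is trivial.

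For (iv), I would apply Theorem~\ref{thm:finite}(iii) when $S$ is finite, since a finite witness pair remains a witness in the larger class. When $S$ is infinite, the intended tool is the decomposition from Lemma~\ref{lem:prefix} and Theorem~\ref{thm:nonunary}. Choose two letters $a\ne b$ and set $p=a$, $q=b$, so that $p^r\ne q^s$ for all $r,s\ge1$. Then the free automaton $\Sigma^*$ is isomorphic to $\B\times\C$, where $\B$ and $\C$ are the quotients by the least right congruences with $\eps\sim a$ and $\eps\sim b$ respectively. Since every accessible automaton is a factor of $\Sigma^*$, this gives $\A\mid\B\times\C$.

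It remains to ensure $\A\nmid\B$ and $\A\nmid\C$, which I expect to be the main obstacle. In $\B$ the initial state is fixed by $a$. So $\A\mid\B$ would force $t(i,a)=i$, and likewise $\A\mid\C$ would force $t(i,b)=i$. If $\A$ has $t(i,a)\neq i$ and $t(i,b)\ne i$, we are done. Otherwise I would choose different words $p,q$ adapted to $\A$. The idea is to pick nonempty $p,q$ with $t(i,p)\ne i\ne t(i,q)$ and $p^r\ne q^s$ for all $r,s\ge1$. Such words exist because $|S|\ge2$ and accessibility give a word $w$ with $t(i,w)\ne i$. One can then try $p=w$ and $q=wc$, or $q=cw$, for a suitable letter $c$, handling separately the case where the first candidates land back at $i$. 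Alternatively, one could mimic the first-letter argument of Theorem~\ref{thm:nonfin}, which presumably works without finiteness.
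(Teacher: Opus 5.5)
Parts (i), (ii), (iii), and (iv) for finite $S$ are correct and follow the paper. The gap is in (iv) for infinite $S$. You correctly reduce this case to finding nonempty words $p,q$ with $t(i,p)\ne i$, $t(i,q)\ne i$ and $p^r\ne q^s$ for all $r,s\ge1$, but you never produce such words.

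A single word $w$ with $t(i,w)\ne i$ does not by itself give such a pair, and your candidates can fail. For example, $t(i,wc)$ may equal $i$. Also, if $w=c^k$ then $w^{k+1}=(wc)^k$, so even the no-common-power condition breaks. Your fallback to Theorem~\ref{thm:nonfin} is only asserted (``presumably''). It would require checking that Case~1 there never uses finiteness and that Case~2 forces $S$ to be finite.

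The missing step is a reduction to single letters. If every letter fixed $i$, every word would fix $i$, and accessibility would give $S=\{i\}$. Hence some letter $a$ has $t(i,a)\ne i$. Take $p=a$ and pick $b\ne a$. Set $q=b$ if $t(i,b)\ne i$, and $q=ba$ otherwise; in the latter case $t(i,ba)=t(t(i,b),a)=t(i,a)\ne i$. Since $p$ and $q$ begin with different letters, no positive powers of them coincide, and \eqref{eq:wordprod} gives $\mathcal E_p\times\mathcal E_q\cong\mathcal W$. Then $\A\mid\mathcal W$ via the surjective homomorphism $\FA$. On the other hand, $\mathcal E_p$ identifies $\eps,p$ and $\mathcal E_q$ identifies $\eps,q$, and $\A$ distinguishes both pairs, so Lemma~\ref{lem:div} gives $\A\nmid\mathcal E_p$ and $\A\nmid\mathcal E_q$. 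This is exactly the choice made in the paper's proof of Theorem~\ref{thm:nonunary}. It needs no finiteness, so your split of (iv) into finite and infinite $S$ is unnecessary.
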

	
	\section{A divisibility criterion}\label{sec:prelim}
	
	The equivalence below is stated by Yao for finite automata~\cite[Proposition~5 and pp.~259--260]{yao2004}. We include the proof because it applies without any finiteness assumption.
	
	\begin{proposition}\label{prop:factor}
		For accessible automata $\A$ and $\B$ over the same alphabet,
		\begin{equation*}
			\A\mid\B
			\quad\Longleftrightarrow\quad
			\A\text{ is a factor of }\B.
		\end{equation*}
	\end{proposition}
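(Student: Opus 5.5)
We prove the two implications separately. For ``$\Rightarrow$'', suppose $\B\cong\A\times\C$ for some accessible $\C$, and let $\phi:\B\to\A\times\C$ be an isomorphism. Compose $\phi$ with the first-coordinate projection $\pi_1:S_{\A\times\C}\to S_{\A}$, $(x,y)\mapsto x$.

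This projection sends the initial pair $(i_{\A},i_{\C})$ to $i_{\A}$. It commutes with transitions because $t_{\A\times\C}$ acts coordinatewise. It is surjective because $\A$ is accessible: every $x\in S_{\A}$ has the form $t_{\A}(i_{\A},w)$, and the reachable pair $(t_{\A}(i_{\A},w),t_{\C}(i_{\C},w))$ lies in $S_{\A\times\C}$ and projects to $x$. Hence $\pi_1\circ\phi:\B\to\A$ is a surjective homomorphism, so $\A$ is a factor of $\B$.

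For ``$\Leftarrow$'', let $\lambda:\B\to\A$ be a surjective homomorphism. Take $\C=\B$ and show that $\B\cong\A\times\B$ via
\[
\psi(x)=(\lambda(x),x)\qquad(x\in S_{\B}).
\]

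The key observation is that for every $w\in\Sigma^*$,
\[
\lambda(t_{\B}(i_{\B},w))=t_{\A}(i_{\A},w).
\]
This follows by induction on $|w|$ from $\lambda(i_{\B})=i_{\A}$ and the one-letter compatibility condition; the same induction gives compatibility with arbitrary words. Consequently the states of $\A\times\B$ are exactly the pairs
\[
(t_{\A}(i_{\A},w),t_{\B}(i_{\B},w))=\bigl(\lambda(t_{\B}(i_{\B},w)),\,t_{\B}(i_{\B},w)\bigr).
\]
Since $\B$ is accessible, every $x\in S_{\B}$ is $t_{\B}(i_{\B},w)$ for some $w$, so these pairs are precisely the pairs $(\lambda(x),x)$ with $x\in S_{\B}$. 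Thus $\psi$ is a well-defined map into $S_{\A\times\B}$ and is onto.

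The map $\psi$ is injective because its second coordinate is $x$. It sends $i_{\B}$ to $(i_{\A},i_{\B})$. It respects transitions since $\psi(t_{\B}(x,a))=(t_{\A}(\lambda(x),a),t_{\B}(x,a))=t_{\A\times\B}(\psi(x),a)$. Therefore $\psi$ is an isomorphism and $\A\mid\B$.

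No step uses finiteness: only accessibility and the induction over word length are needed. The only point requiring care is identifying the reachable states of the product, which is where accessibility of $\B$ is essential, and I expect no genuine obstacle.
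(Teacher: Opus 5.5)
Your proposal is correct and follows essentially the same route as the paper. For the forward direction you compose the isomorphism with the first projection, which is surjective by accessibility of $\A$; for the converse you take $\C=\B$ and show $x\mapsto(\lambda(x),x)$ is an isomorphism $\B\cong\A\times\B$ using $\lambda(t_{\B}(i_{\B},w))=t_{\A}(i_{\A},w)$.
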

	
	\begin{proof}
		If $\B\cong\A\times\C$, compose this isomorphism with the projection
		onto $\A$. The projection is a surjective homomorphism because $\A$
		is accessible. Thus $\A$ is a factor of $\B$.
		
		Conversely, let $\lambda:\B\to\A$ be a surjective homomorphism.
		Since
		\[
		\lambda(t_{\B}(i_{\B},w))=t_{\A}(i_{\A},w)
		\qquad(\forall w\in\Sigma^*),
		\]
		the states of $\A\times\B$ are exactly
		$(\lambda(x),x)$ with $x\in S_{\B}$. Hence
		$x\mapsto(\lambda(x),x)$ is a bijection from $S_{\B}$ onto this
		state set. It preserves the initial state and commutes with every
		transition because $\lambda$ does. This gives
		$\B\cong\A\times\B$, so $\A\mid\B$.
	\end{proof}
	
	If $\A$ and $\B$ are finite, Proposition~\ref{prop:factor}
	shows that $\A\mid\B$ has the same meaning in both classes: its proof
	takes the auxiliary factor to be $\B$ itself. Only the range of
	$\B,\C$ in the definition of primality changes.
	
	For $\A=(S,i,\Sigma,t)$, define its \emph{reaching map} $\FA:\Sigma^*\to S$ by
	\begin{equation*}
		\FA(w)=t(i,w)
		\qquad(\forall w\in\Sigma^*).
	\end{equation*}
	
	An automaton \emph{identifies} two words if they reach the same state, and
	\emph{distinguishes} them if they reach different states.
	
	\begin{lemma}\label{lem:div}
		For accessible automata $\A$ and $\B$ over $\Sigma$,
		\[
		\A\mid\B\quad\Longleftrightarrow\quad
		\left[
		\begin{gathered}
			\forall u,v\in\Sigma^*,\\
			\FB(u)=\FB(v)\ \Longrightarrow\ \FA(u)=\FA(v)
		\end{gathered}
		\right].
		\]
	\end{lemma}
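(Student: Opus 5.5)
By Proposition~\ref{prop:factor}, it suffices to show that $\A$ is a factor of $\B$ exactly when the kernel condition on the reaching maps holds. The plan is to prove each direction directly from the definition of a homomorphism, using accessibility of $\B$ to turn arbitrary states into words.

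For the forward direction, let $\lambda:\B\to\A$ be a surjective homomorphism. First we show $\lambda\circ\FB=\FA$ by induction on $|w|$. The base case $w=\eps$ is $\lambda(i_{\B})=i_{\A}$. For the inductive step, with $w=ua$, we use
\[
\lambda(t_{\B}(\FB(u),a))=t_{\A}(\lambda(\FB(u)),a)=t_{\A}(\FA(u),a)=\FA(ua).
\]
This identity already appears in the proof of Proposition~\ref{prop:factor}. Then $\FB(u)=\FB(v)$ gives $\FA(u)=\lambda(\FB(u))=\lambda(\FB(v))=\FA(v)$.

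For the converse, assume the implication on words and define $\lambda:S_{\B}\to S_{\A}$ as follows. Given $x\in S_{\B}$, choose, by accessibility of $\B$, a word $w$ with $\FB(w)=x$, and set $\lambda(x)=\FA(w)$.

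The main point to check is that $\lambda$ is well defined. If $\FB(w)=\FB(w')=x$, then the hypothesis gives $\FA(w)=\FA(w')$, so the choice of $w$ does not matter. The homomorphism conditions then follow routinely:
\begin{itemize}
\item $\lambda(i_{\B})=\FA(\eps)=i_{\A}$.
\item For $x=\FB(w)$ we have $t_{\B}(x,a)=\FB(wa)$. Hence $\lambda(t_{\B}(x,a))=\FA(wa)=t_{\A}(\FA(w),a)=t_{\A}(\lambda(x),a)$.
\end{itemize}

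Surjectivity comes from accessibility of $\A$. Every $y\in S_{\A}$ equals $\FA(w)$ for some $w$, and then $y=\lambda(\FB(w))$.

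So $\A$ is a factor of $\B$, and Proposition~\ref{prop:factor} gives $\A\mid\B$. No finiteness is used anywhere, so the argument works for arbitrary state sets.
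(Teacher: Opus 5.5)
Your proposal is correct and follows the paper's proof essentially step for step. In the forward direction you compose with the surjective homomorphism from Proposition~\ref{prop:factor} and use $\lambda\circ\FB=\FA$. In the converse you define $\lambda(\FB(w))=\FA(w)$, get well-definedness from the hypothesis and surjectivity from accessibility of $\A$, check the homomorphism conditions, and conclude with Proposition~\ref{prop:factor}.
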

	
	\begin{proof}
		If $\A\mid\B$, Proposition~\ref{prop:factor} gives a surjective
		homomorphism $\lambda:\B\to\A$. Induction on word length gives
		$\lambda(\FB(w))=\FA(w)$ for every $w\in\Sigma^*$.
		Thus $\FB(u)=\FB(v)$ implies $\FA(u)=\FA(v)$.
		
		Conversely, assume the implication in the statement. For each
		$x\in S_{\B}$, accessibility of $\B$ gives a word $w$ with
		$\FB(w)=x$. Define
		\[
		\lambda(x)=\FA(w).
		\]
		If another word $v$ also reaches $x$ in $\B$, then
		$\FB(v)=\FB(w)$, so the assumed implication gives
		$\FA(v)=\FA(w)$. Thus $\lambda$ is well defined on all of
		$S_{\B}$. It is surjective: for every $s\in S_{\A}$,
		accessibility of $\A$ gives a word $u$ with $\FA(u)=s$, and
		then $\lambda(\FB(u))=s$.
		
		Taking $w=\eps$ gives $\lambda(i_{\B})=i_{\A}$.
		For any $x\in S_{\B}$ and $a\in\Sigma$, choose $w$ with
		$\FB(w)=x$. Then
		\[
		\lambda(t_{\B}(x,a))=\lambda(\FB(wa))
		=\FA(wa)=t_{\A}(\FA(w),a)=t_{\A}(\lambda(x),a).
		\]
		Hence $\lambda$ is a surjective homomorphism, and
		Proposition~\ref{prop:factor} gives $\A\mid\B$.
	\end{proof}
	
	\begin{remark}
		By Lemma~\ref{lem:div}, $\A\mid\B\times\C$ if and only if
		\[
		\left.
		\begin{aligned}
			\FB(u)&=\FB(v)\\
			\FC(u)&=\FC(v)
		\end{aligned}
		\right\}
		\ \Longrightarrow\ \FA(u)=\FA(v)
		\qquad(\forall u,v\in\Sigma^*).
		\]
		To prove $\A\nmid\B$, it suffices to find two words identified by
		$\B$ but distinguished by $\A$.
	\end{remark}
	
	\section{Finite-state automata}\label{sec:finite}
	
	Throughout this section the alphabet is finite. Every automaton used to
	prove non-primality also has a finite state set. Thus \emph{prime} means
	finite-state prime.
	
	\subsection{Unit automata}
	
	\begin{proposition}\label{prop:unit}
		Every unit automaton is unrestrictedly prime, and hence also finite-state prime.
	\end{proposition}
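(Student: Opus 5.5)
The plan is to show that a unit automaton divides every accessible automaton over the same alphabet. Once that is established, primality is immediate: for any $\B,\C$ (of any cardinality) with $\A\mid\B\times\C$, we already have $\A\mid\B$. The finite-state statement then follows because primality among finite automata quantifies over a subclass of the pairs $\B,\C$ allowed in the unrestricted definition. We also need that divisibility has the same meaning in both classes, which is exactly the remark after Proposition~\ref{prop:factor}.

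For the divisibility claim, let $\A=(\{s\},s,\Sigma,t)$ be a unit automaton, so $t(s,a)=s$ for every $a\in\Sigma$. Its reaching map $\FA$ is constant, $\FA(w)=s$ for all $w\in\Sigma^*$. Hence the right-hand condition of Lemma~\ref{lem:div} holds trivially for every accessible $\B$: whenever $\FB(u)=\FB(v)$, we have $\FA(u)=s=\FA(v)$. Lemma~\ref{lem:div} then gives $\A\mid\B$.

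Alternatively, one can argue through Proposition~\ref{prop:factor}. The constant map $\lambda:S_{\B}\to\{s\}$ is surjective, because $S_{\B}$ is nonempty (it contains $i_{\B}$). It sends $i_{\B}$ to $s$, and it commutes with transitions because both sides of the compatibility equation equal $s$. So $\A$ is a factor of $\B$.

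No step here is a real obstacle. The only points needing care are that the argument never uses finiteness of $S_{\B}$ or of $\Sigma$, so it covers the unrestricted class, and that the implication in the definition of primality only requires one of $\A\mid\B$ or $\A\mid\C$, which we obtain unconditionally.
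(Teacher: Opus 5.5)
Your proposal is correct and matches the paper's proof. The paper uses exactly your alternative argument: the constant map $S_{\B}\to\{i\}$ is a surjective homomorphism, so Proposition~\ref{prop:factor} gives $\A\mid\B$ and primality follows at once. Your primary route via Lemma~\ref{lem:div} repackages the same observation, and your point that finite-state divisibility is witnessed by a finite auxiliary factor is handled in the paper by the remark after Proposition~\ref{prop:factor}.
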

	
	\begin{proof}
		Let $\A=(\{i\},i,\Sigma,t_{\A})$. For every accessible $\Sigma$-automaton $\B$, the constant map $S_{\B}\to\{i\}$ is surjective, preserves the initial state, and commutes with every transition. Thus $\A$ is a factor of $\B$, and Proposition~\ref{prop:factor} gives $\A\mid\B$. In particular, whenever $\A\mid\B\times\C$, one already has $\A\mid\B$; hence $\A$ is prime.
	\end{proof}
	
	\subsection{Finite unary automata}\label{sec:unary}
	
	Throughout this subsection, let $\Sigma=\{a\}$.
	For a word $w\in\Sigma^*$ and $n\in\N$, write $w^n$ for the
	concatenation of $n$ copies of $w$, with $w^0=\eps$. We use this
	notation over any alphabet.
	
	Every finite accessible unary automaton is isomorphic to a unique
	$\A(\ell,p)$: the first repeated state determines the tail length $\ell$
	and cycle length $p$. For all $m,n\in\N$ with $m<n$,
	\begin{equation}\label{eq:unary}
		\delta_{\ell,p}(0,a^m)=\delta_{\ell,p}(0,a^n)
		\quad\Longleftrightarrow\quad
		\bigl(m\ge\ell\ \text{and}\ p\mid(n-m)\bigr).
	\end{equation}
	
	\begin{lemma}\label{lem:unary}
		For integers $\ell_1,\ell_2\ge0$ and $p_1,p_2\ge1$,
		\begin{equation*}
			\A(\ell_1,p_1)\mid\A(\ell_2,p_2)
			\quad\Longleftrightarrow\quad
			\bigl(\ell_1\le\ell_2\ \text{and}\ p_1\mid p_2\bigr).
		\end{equation*}
	\end{lemma}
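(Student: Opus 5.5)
The plan is to reduce everything to Lemma~\ref{lem:div} and the identification criterion \eqref{eq:unary}. Write $\A_1=\A(\ell_1,p_1)$ and $\A_2=\A(\ell_2,p_2)$. Over $\{a\}$ every word is $a^m$, so Lemma~\ref{lem:div} says that $\A_1\mid\A_2$ holds exactly when, for all $m<n$,
\[
\bigl(m\ge\ell_2\ \text{and}\ p_2\mid(n-m)\bigr)\ \Longrightarrow\ \bigl(m\ge\ell_1\ \text{and}\ p_1\mid(n-m)\bigr).
\]
The case $m=n$ is trivial and the case $m>n$ is symmetric. The lemma therefore becomes a purely arithmetic statement about pairs $(m,n)$.

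For the direction ($\Leftarrow$), assume $\ell_1\le\ell_2$ and $p_1\mid p_2$. Then $m\ge\ell_2$ gives $m\ge\ell_1$, and $p_2\mid(n-m)$ gives $p_1\mid(n-m)$. So the implication holds, and Lemma~\ref{lem:div} gives divisibility.

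For the direction ($\Rightarrow$), I would test the implication on two specific pairs.
\begin{itemize}
\item Take $m=\ell_2$ and $n=\ell_2+p_2$. Then $\A_2$ identifies $a^m$ and $a^n$, since $m\ge\ell_2$ and $p_2\mid p_2$. Hence $\A_1$ must identify them too, which forces $\ell_2\ge\ell_1$ and $p_1\mid p_2$.
\end{itemize}
This single pair already yields both conditions, so the lemma follows.

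The only real obstacle is making sure \eqref{eq:unary} is applied with the ordering $m<n$. Here $p_2\ge1$ guarantees $\ell_2<\ell_2+p_2$, so the test pair is admissible.
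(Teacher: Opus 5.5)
Your proposal is correct and follows essentially the same route as the paper: Lemma~\ref{lem:div} with \eqref{eq:unary} for sufficiency, and the single test pair $a^{\ell_2}$, $a^{\ell_2+p_2}$ for necessity. The only blemish is wording: you announce ``two specific pairs'' but, as you then note yourself, one pair suffices.
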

	
	\begin{proof}
		By Lemma~\ref{lem:div}, divisibility requires that whenever two words
		reach the same state in $\A(\ell_2,p_2)$, they also reach the
		same state in $\A(\ell_1,p_1)$.
		Applying this to $a^{\ell_2}$ and $a^{\ell_2+p_2}$ gives
		$\ell_1\le\ell_2$ and $p_1\mid p_2$ by \eqref{eq:unary}.
		Conversely, these two conditions and \eqref{eq:unary} preserve
		every such equality, so Lemma~\ref{lem:div} applies.
	\end{proof}
	
	For positive integers $r,s$, let $\lcmop(r,s)$ denote their
	least common multiple.
	
	\begin{lemma}\label{lem:product}
		For integers $\ell_1,\ell_2\ge0$ and $p_1,p_2\ge1$,
		\begin{equation*}
			\A(\ell_1,p_1)\times\A(\ell_2,p_2)
			\cong\A\bigl(\max\{\ell_1,\ell_2\},\lcmop(p_1,p_2)\bigr).
		\end{equation*}
	\end{lemma}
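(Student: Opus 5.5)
The plan is to avoid building the isomorphism by hand and to compare reaching maps instead. Two accessible unary automata that identify exactly the same pairs of words are isomorphic. Indeed, if $F_1(u)=F_1(v)\Leftrightarrow F_2(u)=F_2(v)$ for all words $u,v$, then Lemma~\ref{lem:div} gives divisibility in both directions. The map $\lambda$ constructed in its proof, sending $F_1(w)$ to $F_2(w)$, is then well defined and injective, so it is a bijective homomorphism. So it suffices to show that the product and $\A(L,P)$ identify the same pairs, where $L=\max\{\ell_1,\ell_2\}$ and $P=\lcmop(p_1,p_2)$.

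The first step is to describe when the product identifies two words. By construction, the reaching map of $\A(\ell_1,p_1)\times\A(\ell_2,p_2)$ is $w\mapsto(\delta_{\ell_1,p_1}(0,w),\delta_{\ell_2,p_2}(0,w))$. The product is accessible by definition, since only reachable pairs are kept. Hence $a^m$ and $a^n$, with $m<n$, reach the same state in the product exactly when they reach the same state in both components.

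The second step applies \eqref{eq:unary} to each component. Equality in both components holds if and only if $m\ge\ell_1$, $m\ge\ell_2$, $p_1\mid(n-m)$ and $p_2\mid(n-m)$. This is equivalent to $m\ge L$ and $P\mid(n-m)$, which by \eqref{eq:unary} is exactly the condition for $a^m$ and $a^n$ to reach the same state of $\A(L,P)$. The case $m=n$ is trivial, and the case $m>n$ follows by symmetry. So the product and $\A(L,P)$ identify the same pairs, and the first paragraph gives the isomorphism.

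There is no real obstacle here. The only point needing care is the remark that mutual identification of the same pairs yields an isomorphism and not merely mutual divisibility. I will spell that out once via the map $\lambda$ from the proof of Lemma~\ref{lem:div}, noting that it is injective because $\A(L,P)$ distinguishes whatever the product distinguishes.
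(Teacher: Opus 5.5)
Your proof is correct and matches the paper's argument: both apply \eqref{eq:unary} in each coordinate to show that $a^m,a^n$ ($m<n$) reach the same product state exactly when $m\ge\max\{\ell_1,\ell_2\}$ and $\lcmop(p_1,p_2)\mid(n-m)$. The only difference is the last step: the paper appeals to uniqueness of the normal form $\A(\ell,p)$, whereas you note that the map $\lambda$ from Lemma~\ref{lem:div} is a bijective homomorphism when the two reaching maps identify the same pairs, which is equally valid.
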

	
	\begin{proof}
		The words $a^m,a^n$, where $m,n\in\N$ and $m<n$, reach the same
		product state exactly when they reach the same state in both
		coordinates. By \eqref{eq:unary}, this means
		\[
		m\ge\max\{\ell_1,\ell_2\},\qquad
		\lcmop(p_1,p_2)\mid(n-m).
		\]
		The unique form $\A(\ell,p)$ of the product therefore has the
		stated parameters.
	\end{proof}
	
	\begin{theorem}\label{thm:unary}
		Let $\ell\ge0$ and $p\ge1$ be integers with $\ell+p\ge2$. The automaton $\A(\ell,p)$ is finite-state prime if and only if
		\begin{equation*}
			\ell=0\ \text{and $p$ is a prime power},
			\qquad\text{or}\qquad
			\ell\ge1\ \text{and }p=1.
		\end{equation*}
	\end{theorem}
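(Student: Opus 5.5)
Since every finite accessible unary automaton is isomorphic to some $\A(\ell,p)$, and Lemma~\ref{lem:product} computes products of such automata, primality of $\A(\ell,p)$ becomes an arithmetic statement. By Lemmas~\ref{lem:unary} and~\ref{lem:product}, $\A(\ell,p)$ is finite-state prime exactly when, for all $\ell_1,\ell_2\ge0$ and $p_1,p_2\ge1$,
\[
\bigl(\ell\le\max\{\ell_1,\ell_2\}\ \text{and}\ p\mid\lcmop(p_1,p_2)\bigr)
\ \Longrightarrow\
\bigl(\ell\le\ell_1\wedge p\mid p_1\bigr)\ \text{or}\ \bigl(\ell\le\ell_2\wedge p\mid p_2\bigr).
\]
We verify this condition case by case.

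\emph{Sufficiency.} First let $\ell=0$ and $p=q^k$ with $q$ prime and $k\ge1$. The tail conditions are then vacuous. If $q^k\mid\lcmop(p_1,p_2)$, then, since the $q$-adic valuation of the lcm is the maximum of the valuations of $p_1$ and $p_2$, we get $q^k\mid p_1$ or $q^k\mid p_2$. Next let $\ell\ge1$ and $p=1$. The divisibility conditions are then vacuous, and $\ell\le\max\{\ell_1,\ell_2\}$ gives $\ell\le\ell_1$ or $\ell\le\ell_2$.

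\emph{Necessity.} Assume $\ell+p\ge2$ and that neither condition holds. We exhibit $\B$ and $\C$ of the form $\A(\cdot,\cdot)$ with $\A(\ell,p)\mid\B\times\C$ but $\A(\ell,p)$ dividing neither. There are two cases.

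\emph{Case $\ell\ge1$ and $p\ge2$.} Take $\B=\A(\ell,1)$ and $\C=\A(0,p)$. Their product is $\A(\ell,p)$ by Lemma~\ref{lem:product}. However, $p\nmid1$, so $\A(\ell,p)\nmid\B$, and $\ell>0$, so $\A(\ell,p)\nmid\C$.

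\emph{Case $\ell=0$ with $p$ not a prime power.} Since $\ell+p\ge2$, we have $p\ge2$. Because $p$ is not a prime power, it factors as $p=p_1p_2$ with $\gcd(p_1,p_2)=1$ and $1<p_1,p_2<p$. Take $\B=\A(0,p_1)$ and $\C=\A(0,p_2)$. Their product is $\A(0,p)$, yet $p$ divides neither $p_1$ nor $p_2$.

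No step here is a real obstacle; the argument is a routine translation. The only point needing care is the reduction itself: primality quantifies over arbitrary finite unary automata $\B$ and $\C$. That is handled by the uniqueness of the normal form $\A(\ell,p)$, together with the fact, from Proposition~\ref{prop:factor} and Lemma~\ref{lem:unary}, that divisibility depends only on isomorphism type.
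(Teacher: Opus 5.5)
Your proposal is correct and follows the paper's proof essentially step for step. You reduce to the arithmetic conditions $\ell\le\max\{\ell_1,\ell_2\}$ and $p\mid\lcmop(p_1,p_2)$ via Lemmas~\ref{lem:unary} and~\ref{lem:product}, use the prime-valuation property of the least common multiple in the prime-power case, and use the same two decompositions $\A(\ell,p)\cong\A(\ell,1)\times\A(0,p)$ and $\A(0,p)\cong\A(0,p_1)\times\A(0,p_2)$ (with coprime $p_1,p_2>1$) to show non-primality.
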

	
	\begin{proof}
		Let $\B,\C$ be finite accessible unary automata such that
		$\A(\ell,p)\mid\B\times\C$. Up to isomorphism, write
		$\B=\A(\ell_1,p_1)$ and
		$\C=\A(\ell_2,p_2)$, with integers $\ell_1,\ell_2\ge0$
		and $p_1,p_2\ge1$. The preceding lemmas give
		\[
		\ell\le\max\{\ell_1,\ell_2\},\qquad
		p\mid\lcmop(p_1,p_2).
		\]
		If $p=1$, the tail inequality holds for one factor.
		If $\ell=0$ and $p$ is a prime power, then $p$ divides $p_1$ or
		$p_2$: the exponent of a prime in a least common multiple is the
		larger of its exponents in the two integers. Lemma~\ref{lem:unary}
		proves primality in both cases.
		
		For $\ell\ge1$ and $p\ge2$, we have
		\[
		\A(\ell,p)\cong\A(\ell,1)\times\A(0,p).
		\]
		If $\ell=0$ and $p>1$ is not a prime power, write $p=rs$ with
		coprime integers $r,s>1$. Then
		\[
		\A(0,p)\cong\A(0,r)\times\A(0,s).
		\]
		In each case, Lemma~\ref{lem:unary} shows that $\A(\ell,p)$
		divides neither displayed factor.
	\end{proof}
	
	\subsection{Non-unary alphabets: a finite proof}\label{sec:nonunary}
	
	For $u\in\Sigma^*$, write
	$u\Sigma^*=\{uv:v\in\Sigma^*\}$ for the set of words beginning with $u$.
	
	The non-primality conclusion below is Shen's theorem~\cite{shen2022}.
	We give a shorter proof and record the state bounds supplied by the
	construction. For $\A=(S,i,\Sigma,t)$, put
	\begin{equation*}
		H_c=\{\FA(cw):w\in\Sigma^*\}
		\qquad(\forall c\in\Sigma).
	\end{equation*}
	
	\begin{theorem}\label{thm:nonfin}
		Let $\A=(S,i,\Sigma,t)$ be finite and accessible over a finite alphabet,
		with $n=|S|\ge2$ and $|\Sigma|\ge2$. There are finite accessible
		$\Sigma$-automata $\B,\C$ with
		\[
		|S_{\B}|\le n+2,\qquad |S_{\C}|\le n^2+n+2,
		\]
		such that
		\begin{equation*}
			\A\mid\B\times\C,
			\qquad
			\A\nmid\B,
			\qquad
			\A\nmid\C.
		\end{equation*}
		Consequently, $\A$ is not finite-state prime.
	\end{theorem}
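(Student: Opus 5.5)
The plan is to build $\B$ and $\C$ so that each one records the $\A$-state only for words whose first letter lies in a certain set, and then to invoke the Remark after Lemma~\ref{lem:div}. Fix a letter $a\in\Sigma$. Define $\B$ with three kinds of states: a fresh initial state $i_{\B}$, a fresh sink $z$, and a copy of $H_a$. The transitions are $t_{\B}(i_{\B},a)=\FA(a)$ and $t_{\B}(i_{\B},c)=z$ for $c\ne a$. Inside $H_a$ the transitions are those of $t$, and $z$ loops on every letter. Then $|S_{\B}|\le n+2$, $\FB(au)=\FA(au)$ for all $u$, and $\B$ sends every nonempty word not beginning with $a$ to $z$.

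Define $\C$ dually: a fresh initial state, on the first letter $c\ne a$ go to $\FA(c)$ and then follow $t$, while words beginning with $a$ are handled by an auxiliary block. We now check $\A\mid\B\times\C$ through the Remark. Suppose $\FB(u)=\FB(v)$ and $\FC(u)=\FC(v)$.
\begin{itemize}
\item Because the fresh initial states of $\B$ and $\C$ are never re-entered, $u=\eps$ forces $v=\eps$.
\item Otherwise $\B$ forces $u,v$ to agree on whether they begin with $a$.
\item If both begin with $a$, $\B$ already gives $\FA(u)=\FA(v)$.
\item If neither does, $\C$ gives $\FA(u)=\FA(v)$ in the same way.
\end{itemize}
This step only requires that the auxiliary block of $\C$ keep $a$-words away from the copy of $\A$.

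For non-divisibility, by the Remark it suffices that each of $\B$, $\C$ identifies two words that $\A$ distinguishes. For $\B$ this holds when $\bigcup_{c\ne a}H_c$ has at least two elements, because $z$ collapses all those words. If the auxiliary block of $\C$ is just a second sink, then $\A\nmid\C$ whenever $|H_a|\ge2$. So whenever some letter $a$ has $|H_a|\ge2$ and $\bigcup_{c\ne a}H_c$ has at least two elements, the simple construction with $|S_{\C}|\le n+2$ works, and we choose $a$ accordingly.

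The main obstacle is the degenerate configurations where no choice of $a$ works this way. Examples are when every $H_c$ is a singleton, or when there is exactly one letter $a$ with $|H_a|\ge2$ and all other $H_c$ equal one common singleton $\{s\}$. Since $n\ge2$ and $\A$ is accessible, $\A$ still distinguishes two words, and in these cases the distinguishing information must lie either between $\eps$ and the nonempty words, or among the $a$-words. Here I would strengthen $\C$ so that it identifies two $a$-words that $\A$ distinguishes while $\B$ still separates them. The first thing to try is to let the $a$-block of $\C$ run on pairs in $S\times S$: a word $au$ is sent to $(\FA(au),\FA(u))$ or a similar pair. This gives at most $n^2$ states, plus the copy of $\A$ and two fresh states, which totals $n^2+n+2$. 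At the same time I would modify $\B$ to forget one coordinate of this information, so that $\B$ and $\C$ each lose something $\A$ needs but jointly recover $\FA$.

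The all-singleton case is handled by noting that $\A$ must distinguish $\eps$ from every nonempty word, since $H_c$ is a single state that is absorbing for all nonempty extensions. Here I would let one factor identify $\eps$ with a nonempty word and let the other keep a fresh initial state. Verifying that these modified automata are deterministic, accessible and within the stated bounds is routine. Choosing the pair coordinates so that both non-divisibility claims hold in every degenerate case is the delicate step.
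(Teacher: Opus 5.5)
Your generic case is correct. When some $a$ has $|H_a|\ge2$ and $\bigcup_{c\ne a}H_c$ has two states, your two sink constructions work, with both factors of size at most $n+2$. The degenerate configurations, which you yourself call the main obstacle, are not actually handled.

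\textbf{One letter with $|H_a|\ge2$.} Suppose exactly one letter $a$ has $|H_a|\ge2$ and $H_c=\{s\}$ for all $c\ne a$. Then the factor that breaks is $\B$, not $\C$. Every nonempty word outside $a\Sigma^*$ reaches $s$ in $\A$, so $\B$ identifies only words that $\A$ identifies, and $\A\mid\B$ by Lemma~\ref{lem:div}. The sink version of $\C$, by contrast, already identifies $a$-words that $\A$ separates.

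Your proposed repair does not help. Sending $au$ to $(\FA(au),\FA(u))$ records $\FA(au)$, so it does not make $\C$ identify two $a$-words that $\A$ distinguishes, and ``forget one coordinate'' in $\B$ is never specified or checked. The underlying slip is your claim that the distinguishing information lies only at $\eps$ or among the $a$-words. In fact $s$ is absorbing, so $\FA(a)\ne s$, and $\A$ also separates $a$ from $c$.

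The missing idea comes from your own joint check: once one factor splits words by first letter, the other factor is free to merge an $a$-word with a word beginning with a different letter. So put the sink on the $a$-branch of one factor; this factor is non-divisible because $|H_a|\ge2$. In the other factor, merge an $a$-word with a $b$-word that $\A$ distinguishes. The paper does this uniformly in its Case~1, taking $u\in a\Sigma^*$ with $|u|\le2$ and $\FA(u)\ne\FA(b)$, and using states $q_{\FA(uz),\FA(bz)}$. Those pairs track two merged words, which is where the $n^2$ in the bound comes from. In your configuration it even suffices, in the second factor, to send every first letter to $\overline{\FA(a)}$ and then follow $t$.

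\textbf{All $H_c$ singletons.} Your pattern here is that one factor identifies $\eps$ with a nonempty word and the other keeps a never re-entered initial state. This fails when $k=|\{\FA(c):c\in\Sigma\}|=1$. Then $\A$ separates only $\eps$ from nonempty words, so any factor whose initial state is never re-entered is divisible by $\A$. Both factors must identify $\eps$ with nonempty words. For example, let $a$ fix the initial state of one factor and $b$ fix that of the other, with all other letters leading to a sink; then $a^*\cap b^*=\{\eps\}$ gives the product.

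When $k\ge2$, $\A$ also separates nonempty words by the absorbing state $\FA(d)$ reached after their first letter $d$. You do not say how the product recovers this while each factor alone merges words whose first letters lead to different states. The device you need is the paper's. Fix a bijection $\gamma$ from $\{\FA(c):c\in\Sigma\}$ to $\mathbb Z/k\mathbb Z$. Let one factor count length modulo $k$, and let the other start its count at $\gamma(\FA(d))$ after the first letter $d$. The difference of the two coordinates then determines $\FA(d)$. Your pattern can be realized this way if the length counter starts at the initial state, but this construction is the content that has to be supplied.
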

	
	\begin{proof}
		We distinguish two cases.
		
		\smallskip
		\noindent\emph{Case 1: some $H_a$ contains at least two states, where $a\in\Sigma$.}
		Choose $b\in\Sigma\setminus\{a\}$. There is a word $u\in a\Sigma^*$ with $1\le |u|\le2$ such that
		\begin{equation*}
			\FA(u)\ne\FA(b).
		\end{equation*}
		Indeed, take $u=a$ if $\FA(a)\ne\FA(b)$. Otherwise $\FA(a)=\FA(b)$; since $H_a$ contains at least two states, some $d\in\Sigma$ satisfies $t(\FA(a),d)\ne\FA(a)$, and then $u=ad$ works.
		
		To construct $\B$, use two distinct new state labels $r_B,f$ and a distinct label $\bar s$ for each $s\in S$, all different from $r_B,f$. The bar is only a label, not an operation on $S$. Define
		\[
		S_{\B}=\{r_B,f\}\cup
		\{\overline{\FA(cv)}:c\in\Sigma\setminus\{a\},\ v\in\Sigma^*\},
		\qquad
		\B=(S_{\B},r_B,\Sigma,t_{\B}),
		\]
		with transitions, for $c\in\Sigma$,
		\[
		t_{\B}(r_B,c)=
		\begin{cases}
			f,&c=a,\\
			\overline{\FA(c)},&c\ne a,
		\end{cases}
		\qquad
		t_{\B}(f,c)=f,
		\]
		and
		\[
		t_{\B}(\bar s,c)=\overline{t(s,c)}
		\qquad(\forall s\in S\text{ with }\bar s\in S_{\B},\ \forall c\in\Sigma).
		\]
		These transitions stay in $S_{\B}$: if $\bar s=\overline{\FA(dv)}$ for $d\in\Sigma\setminus\{a\}$ and $v\in\Sigma^*$, then $\overline{t(s,c)}=\overline{\FA(dvc)}\in S_{\B}$ for every $c\in\Sigma$. The transition depends only on $s$, not on the word used to reach it. Induction on word length gives, for every $w\in\Sigma^*$,
		\begin{equation*}
			\FB(w)=
			\begin{cases}
				r_B,&w=\eps,\\
				f,&w\in a\Sigma^*,\\
				\overline{\FA(w)},&w\ne\eps\text{ and }w\notin a\Sigma^*.
			\end{cases}
		\end{equation*}
		Thus every state in $S_{\B}$ is reachable: $r_B$ is reached by $\eps$, $f$ by $a$, and each remaining state by a word occurring in its definition. Also $|S_{\B}|\le |S|+2$. Since $H_a$ has at least two states, there exist $v_1,v_2\in\Sigma^*$ with $\FA(av_1)\ne\FA(av_2)$, whereas $\FB(av_1)=\FB(av_2)=f$. Lemma~\ref{lem:div} therefore gives
		\begin{equation*}
			\A\nmid\B.
		\end{equation*}
		
		To construct $\C$, let $P$ consist of the proper prefixes of $u$:
		\[
		P=\{x\in\Sigma^*:u=xv\text{ for some }v\in\Sigma^*\setminus\{\eps\}\}.
		\]
		In particular $\eps\in P$ and $|P|=|u|\le2$. Use state labels $r_x$ for $x\in P$, $p_s$ for $s\in S$, and $q_{s_1,s_2}$ for $s_1,s_2\in S$. Labels from different families, or with different indices within one family, are distinct. Define
		\[
		\begin{aligned}
			S_{\C}={}&\{r_x:x\in P\}\\
			&\cup\{p_{\FA(w)}:
			w\in\Sigma^*\setminus(P\cup u\Sigma^*\cup b\Sigma^*)\}\\
			&\cup\{q_{\FA(uz),\FA(bz)}:z\in\Sigma^*\},
		\end{aligned}
		\]
		and put $\C=(S_{\C},r_{\eps},\Sigma,t_{\C})$. For $x\in P$ and $c\in\Sigma$, set
		\[
		t_{\C}(r_x,c)=
		\begin{cases}
			r_{xc},&xc\in P,\\
			q_{\FA(u),\FA(b)},&xc=u\text{ or }xc=b,\\
			p_{\FA(xc)},&\text{otherwise}.
		\end{cases}
		\]
		For each state of the indicated form in $S_{\C}$ and every $c\in\Sigma$, set
		\[
		t_{\C}(p_s,c)=p_{t(s,c)},
		\qquad
		t_{\C}(q_{s_1,s_2},c)=q_{t(s_1,c),\,t(s_2,c)}.
		\]
		The first transition rule is unambiguous: $u$ starts with $a\ne b$, and neither $u$ nor $b$ belongs to $P$. For $x\in P$, if $xc$ starts with $u$, then $xc=u$ because $|xc|\le|u|$; if $xc$ starts with $b$, then $x=\eps$ and $xc=b$. Hence the remaining clause produces a state in $S_{\C}$.
		
		The other transitions also stay in $S_{\C}$. If a word $w\in\Sigma^*$ does not belong to $P$ and begins with neither $u$ nor $b$, then the same is true of $wc$ for every $c\in\Sigma$. Otherwise $w$ would already be a proper prefix of $u$ or begin with $u$ or $b$. Thus every transition from a state $p_{\FA(w)}$ has its prescribed target in $S_{\C}$. A transition from $q_{\FA(uz),\FA(bz)}$, where $z\in\Sigma^*$, has target $q_{\FA(uzc),\FA(bzc)}$, also in $S_{\C}$. The rules depend only on the displayed state indices, so they are independent of the words representing those indices.
		
		Induction on word length now gives
		\begin{equation}\label{eq:pair}
			\begin{aligned}
				\FC(x)&=r_x &&(\forall x\in P),\\
				\FC(uz)=\FC(bz)&=q_{\FA(uz),\FA(bz)}
				&&(\forall z\in\Sigma^*),\\
				\FC(w)&=p_{\FA(w)}
				&&(\forall w\in\Sigma^*\setminus(P\cup u\Sigma^*\cup b\Sigma^*)).
			\end{aligned}
		\end{equation}
		These formulas show that every state in $S_{\C}$ is reachable. Moreover,
		\[
		|S_{\C}|\le |P|+|S|+|S|^2\le |S|^2+|S|+2,
		\]
		so $\C$ is finite. Since $\FC(u)=\FC(b)$ but $\FA(u)\ne\FA(b)$, Lemma~\ref{lem:div} gives
		\begin{equation*}
			\A\nmid\C.
		\end{equation*}
		
		Finally, let $x,y\in\Sigma^*$ satisfy $\FB(x)=\FB(y)$ and $\FC(x)=\FC(y)$. If their common state in $\B$ is $r_B$, then $x=y=\eps$. If it is $\bar s$ for some $s\in S$, then $\FA(x)=\FA(y)=s$. Otherwise both words begin with $a$. In this last case, a common state $r_z$ in $\C$, with $z\in P$, gives $x=y=z$, while a common state $p_s$, with $s\in S$, gives $\FA(x)=\FA(y)=s$. If their common state is $q_{s_1,s_2}$, with $s_1,s_2\in S$, both words must begin with $u$, because the alternative in \eqref{eq:pair} begins with $b\ne a$. The first state index then gives $\FA(x)=\FA(y)=s_1$. We have proved
		\[
		\left.
		\begin{aligned}
			\FB(x)&=\FB(y)\\
			\FC(x)&=\FC(y)
		\end{aligned}
		\right\}
		\ \Longrightarrow\ \FA(x)=\FA(y)
		\qquad(\forall x,y\in\Sigma^*).
		\]
		By the definition of the product and Lemma~\ref{lem:div}, this proves $\A\mid\B\times\C$.
		
		\smallskip
		\noindent\emph{Case 2: every $H_c$ contains exactly one state.}
		Put
		\begin{equation*}
			s_c=\FA(c),
			\qquad
			Q=\{s_c:c\in\Sigma\},
			\qquad
			k=|Q|.
		\end{equation*}
		This assumption gives $t(s_c,d)=\FA(cd)=s_c$ for all $c,d\in\Sigma$, so every $s_c$ is \emph{absorbing}, meaning that every input letter fixes it. Accessibility gives $S=\{i\}\cup Q$. Moreover $i\notin Q$: if $\FA(c)=i$ for some $c$, then
		\begin{equation*}
			\FA(w)=t(i,w)=t(\FA(c),w)=\FA(cw)=i
		\end{equation*}
		for every $w\in\Sigma^*$, contrary to $|S|\ge2$. Hence $S=\{i\}\sqcup Q$, where $\sqcup$ denotes disjoint union.
		
		Suppose first that $k\ge2$. Write $\mathbb Z/k\mathbb Z$ for the set of integer
		residue classes modulo $k$, with addition taken modulo $k$. Choose a bijection $\gamma:Q\to\mathbb Z/k\mathbb Z$. Let $\B$ have states $\{r_B\}\sqcup\mathbb Z/k\mathbb Z$, initial state $r_B$, and the following transitions. Here $x\xrightarrow{d}y$ means that the letter $d$ sends state $x$ to state $y$.
		\begin{equation*}
			r_B\mathbin{\xrightarrow{d}}0,
			\qquad
			z\mathbin{\xrightarrow{d}}z+1
			\quad(\forall d\in\Sigma,\ \forall z\in\mathbb Z/k\mathbb Z).
		\end{equation*}
		Let $\C$ have states $\{r_C\}\sqcup\mathbb Z/k\mathbb Z$ and initial state $r_C$, with transitions for $d\in\Sigma$ and $z\in\mathbb Z/k\mathbb Z$ given by
		\begin{equation*}
			r_C\mathbin{\xrightarrow{d}}\gamma(s_d),
			\qquad
			z\mathbin{\xrightarrow{d}}z+1.
		\end{equation*}
		Both automata are accessible: word lengths $1,\ldots,k$ reach all
		residues of $\B$, while a suitable one-letter word reaches each
		residue of $\C$. Each has $k+1=n$ states. If $w=dv$ with $d\in\Sigma$ and $v\in\Sigma^*$, then
		\begin{equation*}
			\FB(w)=|v|,
			\qquad
			\FC(w)=\gamma(s_d)+|v|
			\quad\pmod{k}.
		\end{equation*}
		Applying $\gamma^{-1}$ to $\FC(w)-\FB(w)$ recovers $s_d=\FA(w)$, and the pair of initial states occurs only at $\eps$. Therefore $\A\mid\B\times\C$ by Lemma~\ref{lem:div}.
		
		Choose letters $a,b$ with $s_a\ne s_b$. Then $\FB(a)=\FB(b)$, so $\A\nmid\B$. Let $1\le r<k$ represent $\gamma(s_b)-\gamma(s_a)$ modulo $k$, and choose any $c\in\Sigma$. Since $s_a$ is absorbing,
		\begin{equation*}
			\FC(ac^r)=\gamma(s_a)+r=\gamma(s_b)=\FC(b),
			\qquad
			\FA(ac^r)=s_a\ne s_b=\FA(b).
		\end{equation*}
		Thus $\A\nmid\C$.
		
		Finally suppose $k=1$. Then $S=\{i,s\}$ and every nonempty word reaches the absorbing state $s$. Choose distinct letters $a,b$. Let $\B$ have states $0,1$, with initial state $0$, the letter $a$ fixing $0$, and every other letter sending $0$ to the absorbing state $1$. Define $\C$ symmetrically using $b$. Both are accessible and have two states, so the stated bounds hold. The automaton $\B$ identifies $\eps$ with $a$, and $\C$ identifies $\eps$ with $b$, so $\A$ divides neither. Write $a^*=\{a^n:n\in\N\}$ and $b^*=\{b^n:n\in\N\}$.
		For every $w\in\Sigma^*$,
		\begin{equation*}
			\FB(w)=0\Longleftrightarrow w\in a^*,
			\qquad
			\FC(w)=0\Longleftrightarrow w\in b^*.
		\end{equation*}
		Since $a^*\cap b^*=\{\eps\}$, the product distinguishes the empty word from every nonempty word, which is exactly the distinction made by $\A$. Hence $\A\mid\B\times\C$.
	\end{proof}
	
	\subsection{Completion of the finite-state classification}
	
	\begin{proof}[Proof of Theorem~\ref{thm:finite}]
		Part~\textup{(i)} is Proposition~\ref{prop:unit}. For part~\textup{(ii)}, the tail--cycle description gives the unique form $\A(\ell,p)$, and Theorem~\ref{thm:unary} gives exactly the stated arithmetic alternatives. Part~\textup{(iii)} is Theorem~\ref{thm:nonfin}; the automata $\B,\C$ constructed there are finite.
	\end{proof}
	
	\begin{example}\label{ex:tm}
		Let $\Sigma=\{0,1\}$ and consider the Thue--Morse automaton
		$\mathcal T=(\{0,1\},0,\Sigma,\tau)$, where
		\[
		\tau(s,0)=s,\qquad \tau(s,1)=1-s
		\qquad(\forall s\in\{0,1\}).
		\]
		For a word $w\in\Sigma^*$, let $|w|_1$ denote the number of
		occurrences of $1$ in $w$. Then
		\[
		F_{\mathcal T}(w)=|w|_1\bmod 2
		\qquad(\forall w\in\Sigma^*).
		\]
		Thus the state reached by the binary expansion of $n\in\N$ is
		the $n$th term of the Thue--Morse sequence, starting with index $0$.
		
		Case~1 of the proof of Theorem~\ref{thm:nonfin} applies with
		$a=0$, $b=1$, and $u=0$. It gives automata $\B,\C$ with initial
		states $r_B,r_C$, shown with $\mathcal T$ in Figure~\ref{fig:tm}.
		Unattached arrows mark initial states; edge labels are input letters.
		In the notation of that proof, $q_0$ represents the pair $(0,1)$
		and $q_1$ represents $(1,0)$. Both automata are accessible:
		$\eps,0,11,1$ reach $r_B,f,\bar0,\bar1$ in $\B$, and
		$\eps,0,01$ reach $r_C,q_0,q_1$ in $\C$, respectively.
		
		\begin{figure}[htbp]
			\centering
			\begin{minipage}[t]{.22\linewidth}
				\centering\vspace{0pt}
				$\mathcal T$\par\smallskip
				\begin{tikzpicture}[
					>=Stealth,->,semithick,initial text=,initial distance=5mm,
					every node/.style={font=\small},
					every state/.style={draw,circle,minimum size=7mm,inner sep=1pt}]
					\path[use as bounding box] (-1.1,-3.65) rectangle (1.1,.8);
					\node[state,initial,initial where=above] (t0) at (0,-.55) {$0$};
					\node[state] (t1) at (0,-2.15) {$1$};
					\path
					(t0) edge[loop right] node {$0$} (t0)
					edge[bend left=25] node[right] {$1$} (t1)
					(t1) edge[loop right] node {$0$} (t1)
					edge[bend left=25] node[left] {$1$} (t0);
				\end{tikzpicture}
			\end{minipage}\hfill
			\begin{minipage}[t]{.44\linewidth}
				\centering\vspace{0pt}
				$\B$\par\smallskip
				\begin{tikzpicture}[
					>=Stealth,->,semithick,initial text=,initial distance=5mm,
					every node/.style={font=\small},
					every state/.style={draw,circle,minimum size=7mm,inner sep=1pt}]
					\path[use as bounding box] (-2.3,-3.65) rectangle (2.3,.8);
					\node[state,initial,initial where=above] (br) at (0,0) {$r_B$};
					\node[state] (bf) at (-1.15,-1.55) {$f$};
					\node[state] (b1) at (1.15,-1.55) {$\bar1$};
					\node[state] (b0) at (1.15,-3.1) {$\bar0$};
					\path
					(br) edge node[above left] {$0$} (bf)
					edge node[above right] {$1$} (b1)
					(bf) edge[loop left] node {$0,1$} (bf)
					(b1) edge[loop right] node {$0$} (b1)
					edge[bend left=25] node[right] {$1$} (b0)
					(b0) edge[loop right] node {$0$} (b0)
					edge[bend left=25] node[left] {$1$} (b1);
				\end{tikzpicture}
			\end{minipage}\hfill
			\begin{minipage}[t]{.28\linewidth}
				\centering\vspace{0pt}
				$\C$\par\smallskip
				\begin{tikzpicture}[
					>=Stealth,->,semithick,initial text=,initial distance=5mm,
					every node/.style={font=\small},
					every state/.style={draw,circle,minimum size=7mm,inner sep=1pt}]
					\path[use as bounding box] (-1.1,-3.65) rectangle (1.4,.8);
					\node[state,initial,initial where=above] (cr) at (0,0) {$r_C$};
					\node[state] (c0) at (0,-1.55) {$q_0$};
					\node[state] (c1) at (0,-3.1) {$q_1$};
					\path
					(cr) edge node[right] {$0,1$} (c0)
					(c0) edge[loop right] node {$0$} (c0)
					edge[bend left=25] node[right] {$1$} (c1)
					(c1) edge[loop right] node {$0$} (c1)
					edge[bend left=25] node[left] {$1$} (c0);
				\end{tikzpicture}
			\end{minipage}
			\caption{The automata $\mathcal T,\B,\C$ in Example~\ref{ex:tm}.}
			\label{fig:tm}
		\end{figure}
		
		Figure~\ref{fig:tmprod} shows their product. It has exactly the
		five states in the following table.
		Each is reached by the word in the middle column, and the set is
		closed under both letter transitions. The last column defines
		a map $\lambda:S_{\B\times\C}\to\{0,1\}$.
		\[
		\begin{array}{c|c|c}
			\text{product state}&\text{reaching word}&\lambda\\\hline
			(r_B,r_C)&\eps&0\\
			(f,q_0)&0&0\\
			(f,q_1)&01&1\\
			(\bar1,q_0)&1&1\\
			(\bar0,q_1)&11&0
		\end{array}
		\]
		\begin{figure}[htbp]
			\centering
			\begin{tikzpicture}[
				>=Stealth,->,semithick,initial text=,initial distance=6mm,
				every node/.style={font=\small},
				every state/.style={draw,rectangle,rounded corners=3pt,
					minimum width=19mm,minimum height=8mm,inner sep=4pt}]
				\node[state,initial,initial where=above] (pr) at (0,0) {$(r_B,r_C)$};
				\node[state] (pf0) at (-2.2,-1.65) {$(f,q_0)$};
				\node[state] (pf1) at (-2.2,-3.35) {$(f,q_1)$};
				\node[state] (pb1) at (2.2,-1.65) {$(\bar1,q_0)$};
				\node[state] (pb0) at (2.2,-3.35) {$(\bar0,q_1)$};
				\path
				(pr) edge node[above left] {$0$} (pf0)
				edge node[above right] {$1$} (pb1)
				(pf0) edge[loop left] node {$0$} (pf0)
				edge[bend left=25] node[right] {$1$} (pf1)
				(pf1) edge[loop left] node {$0$} (pf1)
				edge[bend left=25] node[left] {$1$} (pf0)
				(pb1) edge[loop right] node {$0$} (pb1)
				edge[bend left=25] node[right] {$1$} (pb0)
				(pb0) edge[loop right] node {$0$} (pb0)
				edge[bend left=25] node[left] {$1$} (pb1);
			\end{tikzpicture}
			\caption{The reachable product $\B\times\C$ in Example~\ref{ex:tm}.}
			\label{fig:tmprod}
		\end{figure}
		
		The diagrams show that $\lambda$ preserves the initial
		state and commutes with both letters. It is surjective, so
		$\mathcal T\mid\B\times\C$. However, $\B$ identifies $0,01$,
		whereas $\mathcal T$ distinguishes them. Likewise, $\C$
		identifies $0,1$, whereas $\mathcal T$ distinguishes them.
		Lemma~\ref{lem:div} gives $\mathcal T\nmid\B$ and
		$\mathcal T\nmid\C$. Hence the Thue--Morse automaton is not
		finite-state prime.
	\end{example}
	\FloatBarrier
	
	\section{Allowing arbitrary state sets}\label{sec:all}
	
	In this section prime means unrestrictedly prime. The proof of
	Theorem~\ref{thm:nonfin} already gives non-primality for arbitrary state
	sets over a finite alphabet: the construction in Case~1 does not require
	finiteness, while Case~2 gives
	$S=\{i\}\cup\{\FA(c):c\in\Sigma\}$, which is finite.
	After treating unary automata, we prove a stronger decomposition result
	in Lemma~\ref{lem:prefix} and Theorem~\ref{thm:nonunary}:
	the product of two explicit quotients recovers every input word.
	This second construction also permits infinite alphabets.
	
	\subsection{Unary automata}
	
	We again work over the alphabet $\{a\}$.
	
	\begin{proposition}\label{prop:inf}
		Every accessible infinite unary automaton is isomorphic to $\U$.
		Every accessible unary automaton divides $\U$.
	\end{proposition}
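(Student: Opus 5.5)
The plan is to work entirely with the reaching map $\FA:\{a\}^*\to S$ and Lemma~\ref{lem:div}. Let $\A=(S,i,\{a\},t)$ be accessible and unary. Words are exactly $a^n$ with $n\in\N$. Accessibility says that $n\mapsto\FA(a^n)$ is surjective onto $S$.

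For the first assertion, assume $S$ is infinite. We show that $n\mapsto\FA(a^n)$ is injective; it is then a bijection $\N\to S$. Suppose $\FA(a^m)=\FA(a^n)$ with $m<n$. Then for every $k\ge0$,
\[
\FA(a^{n+k})=t(\FA(a^n),a^k)=t(\FA(a^m),a^k)=\FA(a^{m+k}).
\]
By induction every state is $\FA(a^j)$ for some $j<n$, so $|S|\le n$, which is a contradiction. Define $\phi:\N\to S$ by $\phi(n)=\FA(a^n)$. Then $\phi(0)=i$ and $\phi(\delta(n,a))=\phi(n+1)=t(\phi(n),a)$. Being bijective, $\phi$ is an isomorphism $\U\to\A$, so $\A\cong\U$.

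For the second assertion, apply Lemma~\ref{lem:div} with $\B=\U$. Since $F_{\U}(a^n)=n$, the map $F_{\U}$ is injective. Hence $F_{\U}(u)=F_{\U}(v)$ forces $u=v$, which gives $\FA(u)=\FA(v)$ trivially. Therefore $\A\mid\U$. Equivalently, $\phi:\N\to S$ as above is a surjective homomorphism $\U\to\A$, and Proposition~\ref{prop:factor} applies.

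The only step needing care is the injectivity argument in the infinite case, specifically the claim that a repetition forces finiteness. This is the same observation as the tail--cycle description preceding \eqref{eq:unary}, and the short induction above handles it.
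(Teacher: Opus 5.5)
Your proof is correct and follows the paper's argument. The map $n\mapsto\FA(a^n)$ is a surjective homomorphism $\U\to\A$, which gives $\A\mid\U$; your alternative through Lemma~\ref{lem:div}, using injectivity of $F_{\U}$, is equivalent. In the infinite case this map is injective because a repetition $\FA(a^m)=\FA(a^n)$ with $m<n$ would confine every state to the first $n$ values.
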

	
	\begin{proof}
		For $\A=(S,i,\{a\},t)$, the map $n\mapsto t(i,a^n)$ from $\N$
		onto $S$ is a surjective homomorphism, so $\A\mid\U$.
		If two values with indices $m<n$ were equal, every later state
		would lie among the first $n$ states. Thus, when $S$ is infinite,
		this map is also injective and is an isomorphism.
	\end{proof}
	
	\begin{proposition}\label{prop:unary}
		For a finite accessible unary automaton, finite-state primality and unrestricted primality 
		are equivalent. Hence the arithmetic classification in Theorem~\ref{thm:unary} is unchanged 
		when $\B,\C$ are allowed to be infinite.
	\end{proposition}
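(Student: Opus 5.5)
One direction is immediate. Unrestricted primality quantifies over a larger class of pairs $\B,\C$ than finite-state primality. By the remark after Proposition~\ref{prop:factor}, divisibility between finite automata means the same thing in both classes. So an unrestrictedly prime finite automaton is finite-state prime. In particular, the non-prime cases of Theorem~\ref{thm:unary} stay non-prime, because the decompositions $\A(\ell,1)\times\A(0,p)$ and $\A(0,r)\times\A(0,s)$ used there consist of finite automata.

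For the converse, I will take $\A=\A(\ell,p)$ finite-state prime and suppose $\A\mid\B\times\C$, where $\B,\C$ are accessible unary automata that may be infinite. By Proposition~\ref{prop:inf}, each of $\B,\C$ is either $\U$ or some finite $\A(\ell_j,p_j)$. If $\B\cong\U$, then $\A\mid\B$ by Proposition~\ref{prop:inf}, and the same holds for $\C$. So only the case where both are finite remains, and then finite-state primality gives $\A\mid\B$ or $\A\mid\C$.

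Here $\U$ is divisible by every unary automaton, so it acts like a ``zero'' for divisibility. Once Proposition~\ref{prop:inf} is available there is no real obstacle. The only care needed is in how the reduction is worded: I will note that $\B\times\C$ need not be finite, but this does not matter, because the argument never examines the product when one factor is $\U$.
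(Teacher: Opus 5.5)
Your proposal is correct and follows essentially the same argument as the paper. The easy direction rests on the larger range of $\B,\C$, with divisibility between finite automata unchanged by Proposition~\ref{prop:factor}. The converse splits into two cases: both factors finite, or one factor infinite, hence isomorphic to $\U$ and divisible by $\A$ via Proposition~\ref{prop:inf}.
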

	
	\begin{proof}
		Unrestricted primality implies finite-state primality. For the
		converse, let $\A$ be finite-state prime and
		suppose that $\A\mid\B\times\C$ for accessible unary automata
		$\B,\C$. If both factors are finite, $\A$ divides one of them
		by assumption. Otherwise, Proposition~\ref{prop:inf} shows that
		an infinite factor is isomorphic to $\U$ and is divisible by $\A$.
	\end{proof}
	
	\begin{theorem}\label{thm:inf}
		The infinite unary successor automaton $\U$ is unrestrictedly prime.
	\end{theorem}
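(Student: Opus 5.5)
We must show that if $\U\mid\B\times\C$ for accessible unary automata $\B,\C$, then $\U\mid\B$ or $\U\mid\C$. The approach is to reduce both sides to the structure of unary automata. By Proposition~\ref{prop:inf}, each of $\B,\C$ is either isomorphic to $\U$ or finite. In the finite case it is isomorphic to some $\A(\ell_j,p_j)$ by the tail--cycle description in Section~\ref{sec:unary}. If either factor is infinite, it is isomorphic to $\U$. Since $\U\mid\U$ (by Proposition~\ref{prop:factor} with the identity homomorphism), we are done in that case.

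So the remaining case is that both $\B$ and $\C$ are finite. Here I will show that the hypothesis $\U\mid\B\times\C$ is impossible. By Lemma~\ref{lem:product}, the product is isomorphic to
\[
\A\bigl(\max\{\ell_1,\ell_2\},\lcmop(p_1,p_2)\bigr),
\]
and in particular it is finite. A cleaner route avoids the lemma altogether: the reachable product of two finite automata has at most $|S_{\B}|\cdot|S_{\C}|$ states. Divisibility gives, via Proposition~\ref{prop:factor}, a surjective homomorphism from $\B\times\C$ onto $\U$. A surjection cannot map a finite set onto the infinite set $\N$, which is a contradiction. Alternatively, Lemma~\ref{lem:div} gives the same conclusion directly. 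The finite product identifies two distinct words $a^m$ and $a^n$, whereas $\U$ distinguishes all words, since $F_{\U}(a^n)=n$.

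Hence in every case where $\U\mid\B\times\C$, one of $\B,\C$ is infinite and therefore divisible by $\U$, which proves primality. I see no real obstacle. The only point needing care is to note that divisibility implies a surjection onto $S_{\U}$, so finite factors cannot be multiples of $\U$. Everything else follows from Propositions~\ref{prop:factor} and~\ref{prop:inf}.
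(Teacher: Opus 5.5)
Your proposal is correct and matches the paper's proof: a multiple of $\U$ must be infinite, since it surjects onto $\N$, so one of the two factors is infinite. Proposition~\ref{prop:inf} then makes that factor isomorphic to $\U$, and hence divisible by $\U$; the detour through Lemma~\ref{lem:product} is unnecessary, as you noted.
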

	
	\begin{proof}
		If $\U\mid\B\times\C$, then the product is infinite, so at least
		one factor is infinite. By Proposition~\ref{prop:inf}, that factor
		is isomorphic to $\U$. Hence $\U$ divides one factor.
	\end{proof}
	
	\subsection{A decomposition by prefix reduction}\label{sec:prefix}
	
	We work over any nonempty alphabet $\Sigma$.
	The following right congruences give the decomposition used in
	Theorem~\ref{thm:nonunary}.
	
	An equivalence relation $\sim$ on $\Sigma^*$ is a \emph{right congruence} if
	\[
	u\sim v\quad\Longrightarrow\quad uw\sim vw
	\qquad(\forall u,v,w\in\Sigma^*).
	\]
	
	For a nonempty word $p\in\Sigma^*$, write each $w\in\Sigma^*$ uniquely as
	\[
	w=p^m\nu_p(w),\qquad \nu_p(w)\notin p\Sigma^*,
	\]
	where $m\in\N$ is maximal. This is possible because
	$m\le |w|/|p|$. The word $\nu_p(w)$ is obtained by deleting all initial
	copies of $p$. In particular, $\nu_p(\eps)=\nu_p(p)=\eps$.
	Deleting copies from the beginning does not increase word length.
	
	\Needspace{12\baselineskip}
	\begin{lemma}\label{lem:prefix}
		For each nonempty $p\in\Sigma^*$, the relation $\sim_p$ defined by
		\[
		u\sim_p v\quad\Longleftrightarrow\quad\nu_p(u)=\nu_p(v)
		\qquad(\forall u,v\in\Sigma^*)
		\]
		is the smallest right congruence identifying $\eps$ and $p$.
		For any nonempty $p,q\in\Sigma^*$,
		\[
		{\sim_p}\cap{\sim_q}=\Delta_{\Sigma^*}
		\quad\Longleftrightarrow\quad
		p^r\ne q^s\quad\text{for all integers }r,s\ge1,
		\]
		where $\Delta_{\Sigma^*}=\{(w,w):w\in\Sigma^*\}$ is the equality
		relation. In particular, the intersection is the equality relation
		whenever $p,q$ have different first letters.
	\end{lemma}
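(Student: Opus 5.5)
The first assertion will be proved by showing three things in turn: $\sim_p$ is a right congruence, it identifies $\eps$ with $p$, and any right congruence $\approx$ identifying $\eps$ with $p$ contains $\sim_p$.

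Since $\nu_p(\eps)=\nu_p(p)=\eps$, the relation $\sim_p$ identifies $\eps$ and $p$. It is an equivalence relation because it is a kernel.

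For the right congruence property, I will establish the key identity
\[
\nu_p(uw)=\nu_p(\nu_p(u)w)\qquad(\forall u,w\in\Sigma^*).
\]
Write $u=p^m\nu_p(u)$. Then $uw=p^m\nu_p(u)w$. Deleting the leading copies of $p$ from $uw$ first removes these $m$ copies and then continues on $\nu_p(u)w$. To justify this I use the observation that the maximal $m'$ with $x\in p^{m'}\Sigma^*$ satisfies $\nu_p(px)=\nu_p(x)$. This holds because $p^{m'}y=px$ with $m'\ge1$ gives $x=p^{m'-1}y$. Iterating gives $\nu_p(p^mx)=\nu_p(x)$. Hence $u\sim_p v$ implies $\nu_p(uw)=\nu_p(vw)$.

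For minimality, let $\approx$ be a right congruence with $\eps\approx p$. Right multiplication gives $x\approx px$ for every $x$, so $p^mx\approx x$ by induction. Thus $w\approx\nu_p(w)$ for every $w$, and therefore $\nu_p(u)=\nu_p(v)$ implies $u\approx v$.

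For the equivalence, the direction ($\Rightarrow$) is easy. If $p^r=q^s$ with $r,s\ge1$, then $\nu_p(p^r)=\eps=\nu_p(\eps)$ and similarly for $q$. So the nonempty word $p^r$ is identified with $\eps$ by both relations, and the intersection is not the equality.

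For ($\Leftarrow$), which I expect to be the main obstacle, assume $u\ne v$ with $\nu_p(u)=\nu_p(v)$ and $\nu_q(u)=\nu_q(v)$, and derive $p^r=q^s$. Put $x=\nu_p(u)$ and $y=\nu_q(u)$. Then $u=p^mx$, $v=p^{m'}x$ with $m\ne m'$, say $m<m'$, and $u=q^ny$, $v=q^{n'}y$. Comparing lengths, $|v|-|u|=(m'-m)|p|=(n'-n)|q|>0$, so $n<n'$ as well.

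I would aim to show that $p$ and $q$ are powers of a common word, using the classical Lyndon–Schützenberger fact that $pq=qp$ implies $p,q\in z^*$. Once $p=z^\alpha$ and $q=z^\beta$, we get $p^\beta=q^\alpha$.

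To get commutation, I plan the following argument. The word $v$ has both $p^{m'-m}u$ and $q^{n'-n}u$ as its form. More precisely, $v=p^{m'}x$ and $u=p^mx$ give $v=p^{m'-m}u$, and likewise $v=q^{n'-n}u$. Hence $p^{m'-m}u=q^{n'-n}u$, and cancelling $u$ on the right yields $p^{m'-m}=q^{n'-n}$ directly, with both exponents at least $1$. This is simpler than I feared, so the remaining care is only verifying the iterated identity $\nu_p(p^mx)=\nu_p(x)$, which is also the step used in the congruence proof.

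Finally, if $p$ and $q$ have different first letters, then $p^r$ and $q^s$ start with different letters, so $p^r\ne q^s$ and the intersection is the equality.
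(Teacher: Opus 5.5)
Your proof is correct and is essentially the paper's argument: the congruence and minimality parts rest on $\nu_p(p^mx)=\nu_p(x)$ and $x\approx px$, and the key step $p^{m'-m}u=v=q^{n'-n}u\Rightarrow p^{m'-m}=q^{n'-n}$ by right cancellation is exactly how the paper concludes, so the Lyndon--Sch\"utzenberger detour is indeed unnecessary. The only thing to tidy is the justification of $\nu_p(px)=\nu_p(x)$: there $m'$ should be the maximal exponent of $p$ in $px$ (not in $x$), and you should say explicitly that $x=p^{m'-1}y$ with $y\notin p\Sigma^*$ gives $\nu_p(x)=y$ by uniqueness of the decomposition.
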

	
	\begin{proof}
		Equality of the values of $\nu_p$ defines an equivalence relation.
		If $u\sim_p v$, write $u=p^mz$ and $v=p^nz$, where
		$z=\nu_p(u)=\nu_p(v)$ and $m,n\in\N$. For every $x\in\Sigma^*$,
		deleting the displayed copies of $p$ and then any further initial
		copies gives
		\[
		\nu_p(ux)=\nu_p(zx)=\nu_p(vx).
		\]
		Thus $\sim_p$ is a right congruence and identifies $\eps,p$.
		Any right congruence identifying these words identifies $z$ with
		$p^mz$ for every $z\in\Sigma^*$ and $m\in\N$.
		It therefore contains $\sim_p$.
		
		Suppose that distinct words $u,v$ are equivalent under both
		$\sim_p$ and $\sim_q$. Two distinct words with the same
		$\nu_p$-value have different lengths. After interchanging $u,v$,
		assume $|u|>|v|$. Their common reduced representative gives
		$u=p^rv$ for an integer $r\ge1$. The same argument for $q$ gives
		$u=q^sv$ for an integer $s\ge1$. Cancelling the common suffix
		$v$ from $p^rv=q^sv$ yields $p^r=q^s$.
		Conversely, if $p^r=q^s=w$ for some $r,s\ge1$, then
		$\eps\ne w$, but $\eps\sim_p w$ and $\eps\sim_q w$.
		This proves the equivalence. Words with different first letters
		cannot have equal positive powers.
	\end{proof}
	
	If $\sim$ is an equivalence relation on a set $X$, write
	$[x]_{\sim}=\{y\in X:x\sim y\}$ for the equivalence class of $x\in X$,
	and $X/{\sim}=\{[x]_{\sim}:x\in X\}$ for the set of its classes.
	
	For each nonempty $p\in\Sigma^*$, define the quotient automaton
	\[
	\mathcal E_p
	=\bigl(\Sigma^*/{\sim_p},[\eps]_{\sim_p},\Sigma,e_p\bigr),
	\]
	where
	\[
	e_p([w]_{\sim_p},c)=[wc]_{\sim_p}
	\qquad(\forall w\in\Sigma^*,\ \forall c\in\Sigma).
	\]
	Lemma~\ref{lem:prefix} makes the transition independent of the chosen
	representative. Every class $[w]_{\sim_p}$ is reached by $w$, so
	$\mathcal E_p$ is accessible. Moreover,
	\[
	[w]_{\sim_p}
	=\{p^m\nu_p(w):m\in\N\}\qquad(\forall w\in\Sigma^*).
	\]
	Thus $\nu_p(w)$ is the unique shortest word in its class. We may
	identify the states with the words in $\Sigma^*\setminus p\Sigma^*$.
	In this description, the initial state is $\eps$ and the transition is
	\[
	z\longmapsto\nu_p(zc)
	\qquad(\forall z\in\Sigma^*\setminus p\Sigma^*,\ \forall c\in\Sigma).
	\]
	
	The \emph{free word automaton} is
	\[
	\mathcal W=(\Sigma^*,\eps,\Sigma,d),\qquad
	d(w,c)=wc\quad(\forall w\in\Sigma^*,\ \forall c\in\Sigma).
	\]
	Its state reached by $w$ is $w$ itself. The next theorem gives an exact
	criterion for the two quotients to recover this automaton.
	
	\begin{theorem}\label{thm:nonunary}
		Let $p,q$ be nonempty words over a nonempty alphabet $\Sigma$. Then
		\begin{equation}\label{eq:wordprod}
			\mathcal E_p\times\mathcal E_q\cong\mathcal W
			\quad\Longleftrightarrow\quad
			p^r\ne q^s\quad\text{for all integers }r,s\ge1.
		\end{equation}
		If $|\Sigma|\ge2$ and $\A$ is a nontrivial accessible
		$\Sigma$-automaton, there are words $p,q$ with different first
		letters, $|p|=1$ and $1\le|q|\le2$, such that
		\[
		\mathcal E_p\times\mathcal E_q\cong\mathcal W,\qquad
		\A\mid\mathcal E_p\times\mathcal E_q,
		\qquad \A\nmid\mathcal E_p,\quad\A\nmid\mathcal E_q.
		\]
		In particular, $\A$ is not prime. No finiteness assumption on
		$S_{\A}$ or $\Sigma$ is needed.
	\end{theorem}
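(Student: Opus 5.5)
The plan is to reduce everything to reaching maps. For an accessible $\Sigma$-automaton $\mathcal X$, every isomorphism $\mathcal W\to\mathcal X$ must send $w$ to $F_{\mathcal X}(w)$, by induction on $|w|$. So $\mathcal W\cong\mathcal X$ holds exactly when $F_{\mathcal X}$ is injective.

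\textbf{The equivalence \eqref{eq:wordprod}.}
\begin{enumerate}
\item Take $\mathcal X=\mathcal E_p\times\mathcal E_q$. By construction of $\mathcal E_p$ and $\mathcal E_q$ and the definition of the reachable product,
\[
F_{\mathcal X}(w)=\bigl([w]_{\sim_p},[w]_{\sim_q}\bigr).
\]
\item The map $w\mapsto F_{\mathcal X}(w)$ is always a surjective homomorphism $\mathcal W\to\mathcal X$: surjectivity is the definition of the reachable product, and compatibility with letters is immediate.
\item This map is injective exactly when ${\sim_p}\cap{\sim_q}=\Delta_{\Sigma^*}$. Lemma~\ref{lem:prefix} translates that condition into $p^r\ne q^s$ for all $r,s\ge1$.
\item For the converse direction, suppose $p^r=q^s$. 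Then $\mathcal X$ identifies $\eps$ and $p^r$, so $F_{\mathcal X}$ is not injective. By the opening remark, no isomorphism with $\mathcal W$ can exist.
\end{enumerate}

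\textbf{The non-primality statement.} Now let $|\Sigma|\ge2$ and let $\A$ be nontrivial and accessible. By the remark after Lemma~\ref{lem:div}, to get $\A\nmid\mathcal E_p$ it suffices that $\FA(p)\ne\FA(\eps)=i$, because $\mathcal E_p$ identifies $\eps$ and $p$. The same holds for $q$. So I only need $p,q$ with different first letters and $\FA(p)\ne i\ne\FA(q)$.

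\begin{enumerate}
\item Some letter $a$ satisfies $t(i,a)\ne i$. Otherwise every letter fixes $i$, and accessibility forces $S=\{i\}$, contradicting nontriviality. Put $p=a$, so $|p|=1$.
\item Choose any letter $b\ne a$.
\item If $t(i,b)\ne i$, put $q=b$.
\item Otherwise $t(i,b)=i$. Put $q=ba$; then $\FA(ba)=t(i,a)\ne i$ and $|q|=2$.
\end{enumerate}
In either case $p$ and $q$ begin with different letters. Lemma~\ref{lem:prefix} then gives $p^r\ne q^s$ for all $r,s\ge1$, and the first part gives $\mathcal E_p\times\mathcal E_q\cong\mathcal W$.

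Finally, $\A\mid\mathcal W$ by Lemma~\ref{lem:div}. Indeed, $F_{\mathcal W}$ is the identity map, so whenever $F_{\mathcal W}(u)=F_{\mathcal W}(v)$ we have $u=v$, and hence $\FA(u)=\FA(v)$. Therefore $\A\mid\mathcal E_p\times\mathcal E_q$, while $\A\nmid\mathcal E_p$ and $\A\nmid\mathcal E_q$ by the choice of $p,q$. So $\A$ is not prime.

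Nothing here uses finiteness of $S_{\A}$ or of $\Sigma$. The only point needing care is the opening remark: isomorphism must be tested through reaching maps, and not just by comparing state sets abstractly.
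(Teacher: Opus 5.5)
Your proof is correct and follows the paper's argument: the only homomorphism $\mathcal W\to\mathcal E_p\times\mathcal E_q$ is the reaching map, so isomorphism reduces via Lemma~\ref{lem:prefix} to ${\sim_p}\cap{\sim_q}=\Delta_{\Sigma^*}$, and your choice $p=a$, $q\in\{b,ba\}$ is exactly the paper's. The only cosmetic difference is that you get $\A\mid\mathcal E_p\times\mathcal E_q$ from $\A\mid\mathcal W$ and the isomorphism, whereas the paper writes down the factor map $\lambda$ on the product directly.
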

	
	\begin{proof}
		The map
		\[
		\phi(w)=([w]_{\sim_p},[w]_{\sim_q})
		\qquad(\forall w\in\Sigma^*)
		\]
		is a surjective homomorphism from $\mathcal W$ to
		$\mathcal E_p\times\mathcal E_q$. It is the unique homomorphism
		between these automata, since it must send each word to the state
		reached by that word. Hence an isomorphism exists exactly when
		$\phi$ is injective. Lemma~\ref{lem:prefix} gives the criterion
		in \eqref{eq:wordprod}.
		
		Let $\A=(S,i,\Sigma,t)$ be nontrivial and accessible, with
		$|\Sigma|\ge2$. Some letter $a$ satisfies $\FA(a)\ne i$;
		otherwise every word would fix $i$. Choose $b\ne a$ and put
		\[
		p=a,\qquad
		q=\begin{cases}
			b,&\FA(b)\ne i,\\
			ba,&\FA(b)=i.
		\end{cases}
		\]
		In the second case $\FA(ba)=t(\FA(b),a)=t(i,a)\ne i$.
		Thus $p,q$ have different first letters and satisfy
		$\FA(p)\ne i$ and $\FA(q)\ne i$.
		The automaton $\mathcal E_p$ identifies $\eps,p$, whereas $\A$
		distinguishes them; $\eps,q$ give the same obstruction for
		$\mathcal E_q$. Lemma~\ref{lem:div} proves both non-divisibilities.
		
		By \eqref{eq:wordprod}, every reachable product state has the form
		$([w]_{\sim_p},[w]_{\sim_q})$ for a unique $w\in\Sigma^*$.
		The map
		\[
		\lambda([w]_{\sim_p},[w]_{\sim_q})=\FA(w)
		\qquad(\forall w\in\Sigma^*)
		\]
		is therefore well defined. It is surjective by accessibility,
		preserves the initial state, and commutes with each letter because
		$\FA(wc)=t(\FA(w),c)$. This proves
		$\A\mid\mathcal E_p\times\mathcal E_q$.
	\end{proof}
	
	When $p,q$ have different first letters, at least one of
	$\nu_p(w),\nu_q(w)$ equals $w$. The inverse of $\phi$ then chooses
	the longer coordinate; equal-length coordinates are both $w$.
	This recovery rule uses the different-first-letter assumption, not
	merely the absence of a common positive power.
	If $|\Sigma|\ge2$, each $\mathcal E_p$ is infinite: for a letter $c$
	different from the first letter of $p$, the words $c^m$ ($m\in\N$)
	are distinct reduced representatives. Thus the finite constructions
	of Section~\ref{sec:finite} are still needed when the factors must
	be finite.
	
	\Needspace{10\baselineskip}
	\begin{example}\label{ex:binary}
		Let $\Sigma=\{0,1\}$ and consider the infinite-state automaton
		\[
		\mathcal D=(\N,0,\Sigma,t_{\mathcal D}),\qquad
		t_{\mathcal D}(n,c)=2n+c
		\quad(\forall n\in\N,\ \forall c\in\Sigma).
		\]
		Here each input letter is also regarded as the integer $0$ or $1$.
		For $w\in\Sigma^*$, let $\operatorname{val}_2(w)$ be its binary
		value, defined by
		\[
		\operatorname{val}_2(\eps)=0,\qquad
		\operatorname{val}_2(wc)=2\operatorname{val}_2(w)+c
		\quad(\forall w\in\Sigma^*,\ \forall c\in\Sigma).
		\]
		Thus words are read from left to right, and
		$F_{\mathcal D}(w)=\operatorname{val}_2(w)$ for every
		$w\in\Sigma^*$. Every natural number is reached by its binary
		expansion, so $\mathcal D$ is accessible. The first states and
		transitions are shown in Figure~\ref{fig:binary}(a).
		
		Since $F_{\mathcal D}(1)=1$ and $F_{\mathcal D}(0)=0$, the proof
		of Theorem~\ref{thm:nonunary} chooses $p=1$ and $q=01$.
		The states of $\mathcal E_1$ may be represented by
		$\{\eps\}\cup0\Sigma^*$, and those of $\mathcal E_{01}$ by
		$\Sigma^*\setminus01\Sigma^*$. Their initial states are $\eps$,
		and their transitions append the input letter and then delete
		all initial copies of $1$ or $01$, respectively.
		Figures~\ref{fig:binary}(b) and~\ref{fig:prefix}(a) show these
		automata.
		
		\begin{figure}[htbp]
			\centering
			\begin{minipage}[t]{.48\textwidth}
				\centering
				\begin{tikzpicture}[
					>=Stealth,->,semithick,font=\small,
					every state/.style={draw,circle,minimum size=7mm,inner sep=1pt},
					initial text=,initial distance=5mm]
					\node[state,initial] (d0) at (0,0) {$0$};
					\node[state] (d1) at (0,-1.25) {$1$};
					\node[state] (d2) at (-1.1,-2.5) {$2$};
					\node[state] (d3) at (1.1,-2.5) {$3$};
					\path
					(d0) edge[loop above] node {$0$} (d0)
					edge node[right] {$1$} (d1)
					(d1) edge node[above left] {$0$} (d2)
					edge node[above right] {$1$} (d3);
					\foreach \q in {d2,d3}{
						\draw (\q) -- ++(-.4,-.8)
						node[pos=.7,left,font=\scriptsize] {$0$}
						node[below,inner sep=1pt] {$\vdots$};
						\draw (\q) -- ++(.4,-.8)
						node[pos=.7,right,font=\scriptsize] {$1$}
						node[below,inner sep=1pt] {$\vdots$};
					}
				\end{tikzpicture}
				
				\smallskip
				\textup{(a)} \(\mathcal D:\quad n\xrightarrow{c}2n+c.\)
			\end{minipage}\hfill
			\begin{minipage}[t]{.48\textwidth}
				\centering
				\begin{tikzpicture}[
					>=Stealth,->,semithick,font=\small,
					every state/.style={draw,circle,minimum size=7mm,inner sep=1pt},
					initial text=,initial distance=5mm]
					\node[state,initial] (e0) at (0,0) {$\eps$};
					\node[state] (e1) at (0,-1.25) {$0$};
					\node[state] (e2) at (-1.1,-2.5) {$00$};
					\node[state] (e3) at (1.1,-2.5) {$01$};
					\path
					(e0) edge[loop above] node {$1$} (e0)
					edge node[right] {$0$} (e1)
					(e1) edge node[above left] {$0$} (e2)
					edge node[above right] {$1$} (e3);
					\foreach \q in {e2,e3}{
						\draw (\q) -- ++(-.4,-.8)
						node[pos=.7,left,font=\scriptsize] {$0$}
						node[below,inner sep=1pt] {$\vdots$};
						\draw (\q) -- ++(.4,-.8)
						node[pos=.7,right,font=\scriptsize] {$1$}
						node[below,inner sep=1pt] {$\vdots$};
					}
				\end{tikzpicture}
				
				\smallskip
				\textup{(b)} \(\mathcal E_1:\quad z\xrightarrow{c}\nu_1(zc).\)
			\end{minipage}
			\caption{The automata $\mathcal D$ and $\mathcal E_1$ in Example~\ref{ex:binary}.}
			\label{fig:binary}
		\end{figure}
		
		In $\mathcal E_1$, the words $\eps,1$ reach the same state;
		in $\mathcal E_{01}$, the words $\eps,01$ reach the same state.
		Since
		\[
		F_{\mathcal D}(\eps)=0,\qquad
		F_{\mathcal D}(1)=F_{\mathcal D}(01)=1,
		\]
		Lemma~\ref{lem:div} shows that $\mathcal D$ divides neither factor.
		
		Figure~\ref{fig:prefix}(b) labels the product state
		$(\nu_1(w),\nu_{01}(w))$ by the word $w$; each transition appends
		its input letter. Since $1$ and $01$ have different first letters,
		the longer coordinate recovers $w$. Thus
		\[
		\lambda\bigl(\nu_1(w),\nu_{01}(w)\bigr)
		=\operatorname{val}_2(w)
		\qquad(\forall w\in\Sigma^*)
		\]
		is well defined. It is surjective, preserves the initial state,
		and commutes with both letters by the recursion for
		$\operatorname{val}_2$. Hence
		$\mathcal D\mid\mathcal E_1\times\mathcal E_{01}$, although
		$\mathcal D$ divides neither factor.
		
		\begin{figure}[htbp]
			\centering
			\begin{minipage}[t]{.48\textwidth}
				\centering
				\begin{tikzpicture}[
					>=Stealth,->,semithick,font=\small,
					every state/.style={draw,circle,minimum size=7mm,inner sep=1pt},
					initial text=,initial distance=5mm]
					\node[state,initial above] (e0) at (0,0) {$\eps$};
					\node[state] (e1) at (-1.25,-1.25) {$0$};
					\node[state] (e2) at (1.25,-1.25) {$1$};
					\node[state] (e3) at (-1.7,-2.5) {$00$};
					\node[state] (e4) at (.5,-2.5) {$10$};
					\node[state] (e5) at (2,-2.5) {$11$};
					\path
					(e0) edge node[above left] {$0$} (e1)
					edge node[above right] {$1$} (e2)
					(e1) edge[bend left=55] node[above left] {$1$} (e0)
					edge node[left] {$0$} (e3)
					(e2) edge node[above left] {$0$} (e4)
					edge node[above right] {$1$} (e5);
					\foreach \q in {e3,e4,e5}{
						\draw (\q) -- ++(-.35,-.8)
						node[pos=.7,left,font=\scriptsize] {$0$}
						node[below,inner sep=1pt] {$\vdots$};
						\draw (\q) -- ++(.35,-.8)
						node[pos=.7,right,font=\scriptsize] {$1$}
						node[below,inner sep=1pt] {$\vdots$};
					}
				\end{tikzpicture}
				
				\smallskip
				\textup{(a)} \(\mathcal E_{01}:\quad z\xrightarrow{c}\nu_{01}(zc).\)
			\end{minipage}\hfill
			\begin{minipage}[t]{.48\textwidth}
				\centering
				\begin{tikzpicture}[
					>=Stealth,->,semithick,font=\small,
					every state/.style={draw,circle,minimum size=7mm,inner sep=1pt},
					initial text=,initial distance=5mm]
					\node[state,initial above] (w0) at (0,0) {$\eps$};
					\node[state] (w1) at (-1.3,-1.25) {$0$};
					\node[state] (w2) at (1.3,-1.25) {$1$};
					\node[state] (w3) at (-1.95,-2.5) {$00$};
					\node[state] (w4) at (-.65,-2.5) {$01$};
					\node[state] (w5) at (.65,-2.5) {$10$};
					\node[state] (w6) at (1.95,-2.5) {$11$};
					\path
					(w0) edge node[above left] {$0$} (w1)
					edge node[above right] {$1$} (w2)
					(w1) edge node[above left] {$0$} (w3)
					edge node[above right] {$1$} (w4)
					(w2) edge node[above left] {$0$} (w5)
					edge node[above right] {$1$} (w6);
					\foreach \q in {w3,w4,w5,w6}{
						\draw (\q) -- ++(-.3,-.8)
						node[pos=.7,left,font=\scriptsize] {$0$}
						node[below,inner sep=1pt] {$\vdots$};
						\draw (\q) -- ++(.3,-.8)
						node[pos=.7,right,font=\scriptsize] {$1$}
						node[below,inner sep=1pt] {$\vdots$};
					}
				\end{tikzpicture}
				
				\smallskip
				\textup{(b)} \(\mathcal E_1\times\mathcal E_{01}\cong\mathcal W.\)
			\end{minipage}
			\caption{The automaton $\mathcal E_{01}$ and the product $\mathcal E_1\times\mathcal E_{01}$.}
			\label{fig:prefix}
		\end{figure}
	\end{example}
	\FloatBarrier
	
	\begin{remark}
		The dots in Figures~\ref{fig:binary} and~\ref{fig:prefix} indicate
		continuing branches. In $\mathcal E_1$, every state $z\ne\eps$ has
		transitions $z\xrightarrow{c}zc$ for $c\in\{0,1\}$. The same rule
		holds in $\mathcal E_{01}$ for $z\notin\{\eps,0\}$.
	\end{remark}
	
	\subsection{Completion of the classification}\label{sec:complete}
	
	\begin{proof}[Proof of Theorem~\ref{thm:main}]
		Proposition~\ref{prop:unit} proves part~\textup{(i)}.
		If the alphabet has one letter, Proposition~\ref{prop:inf} shows
		that an infinite automaton is isomorphic to $\U$, which is prime
		by Theorem~\ref{thm:inf}. This proves part~\textup{(ii)}.
		For a finite unary automaton, Theorem~\ref{thm:unary} gives the
		classification among finite automata, and Proposition~\ref{prop:unary}
		shows that it is unchanged when infinite factors are allowed.
		This proves part~\textup{(iii)}. Theorem~\ref{thm:nonunary} proves
		part~\textup{(iv)}.
	\end{proof}
	
	\begin{remark}
		Theorems~\ref{thm:finite} and~\ref{thm:main} show that finite-state
		and unrestricted primality coincide for finite automata over finite
		alphabets. By Theorem~\ref{thm:nonunary}, Theorem~\ref{thm:main} also
		holds for arbitrary alphabets; over an infinite alphabet, only unit
		automata are prime.
	\end{remark}
	\section*{Acknowledgments}
	Z. Shen was supported by the National Natural Science Foundation of
	China (Grant No.~12301011).
	
	\section*{Declaration of generative AI and AI-assisted technologies
		in the manuscript preparation process}
	During the writing of the paper, the authors used AI tools (ChatGPT and DeepSeek)
	to refine some proof details, adjust the structure of the paper, and improve the exposition. 
	All mathematical statements and proofs in the paper were verified by the authors. The authors take full
	responsibility for the paper.
	

\end{document}